\documentclass[11pt,a4paper]{article}

\usepackage[T1]{fontenc}
\usepackage[utf8]{inputenc}
\usepackage{setspace}
\usepackage{geometry}
\usepackage{amsmath,amssymb,amsthm}
\usepackage{booktabs}
\usepackage{array}
\usepackage{multirow}
\usepackage{graphicx}
\usepackage{caption}
\usepackage{subcaption}
\usepackage{float}
\usepackage{enumitem}
\usepackage{parskip}
\PassOptionsToPackage{hyphens}{url}
\usepackage{hyperref}
\usepackage{natbib}
\usepackage{xcolor}
\usepackage{mdframed}
\usepackage{tikz}
\usepackage{pgfplots}
\pgfplotsset{compat=1.18}
\usetikzlibrary{arrows.meta, positioning, shapes.geometric,
                fit, backgrounds, calc, patterns}
\usepackage{listings}

\theoremstyle{plain}
\newtheorem{proposition}{Proposition}
\newtheorem{theorem}{Theorem}
\newtheorem{corollary}{Corollary}
\theoremstyle{definition}
\newtheorem{definition}{Definition}
\theoremstyle{remark}
\newtheorem{remark}{Remark}

\mdfdefinestyle{principlebox}{
  linecolor=gray!40, backgroundcolor=gray!4,
  roundcorner=5pt, linewidth=0.8pt,
  innertopmargin=9pt, innerbottommargin=9pt,
  innerleftmargin=13pt, innerrightmargin=13pt,
  skipabove=\baselineskip, skipbelow=\baselineskip
}
\mdfdefinestyle{highlightbox}{
  linecolor=blue!25, backgroundcolor=blue!3,
  roundcorner=5pt, linewidth=0.8pt,
  innertopmargin=9pt, innerbottommargin=9pt,
  innerleftmargin=13pt, innerrightmargin=13pt,
  skipabove=\baselineskip, skipbelow=\baselineskip
}
\mdfdefinestyle{warnbox}{
  linecolor=orange!40, backgroundcolor=orange!4,
  roundcorner=5pt, linewidth=0.8pt,
  innertopmargin=9pt, innerbottommargin=9pt,
  innerleftmargin=13pt, innerrightmargin=13pt,
  skipabove=\baselineskip, skipbelow=\baselineskip
}

\newcommand{\system}{\textsc{Foresight Arena}}
\newcommand{\brier}{\mathcal{B}}
\newcommand{\E}{\mathbb{E}}
\newcommand{\Var}{\mathrm{Var}}

\newcommand{\UNC}{\mathrm{UNC}}
\newcommand{\REL}{\mathrm{REL}}
\newcommand{\RES}{\mathrm{RES}}
\newcommand{\SE}{\mathrm{SE}}
\newcommand{\Bern}{\mathrm{Bernoulli}}

\hypersetup{colorlinks=true, linkcolor=black,
            citecolor=black, urlcolor=black,
            pdftitle={Foresight Arena}}
\begin{document}

\hypersetup{pageanchor=false}
\begin{titlepage}
  \centering
  \vspace*{0.8cm}

  {\LARGE\linespread{1.1}\selectfont
    \textbf{Foresight Arena: An On-Chain Benchmark\\[6pt]
            for Evaluating AI Forecasting Agents}\par}

  \vspace{0.9cm}
  {\large Maksym Nechepurenko\textsuperscript{*}
          \hspace{1.5em}
          Pavel Shuvalov\textsuperscript{\textdagger}\par}
  \vspace{0.25cm}
  {\normalsize 23 April 2026\par}
  \vspace{0.7cm}

  \begin{mdframed}[style=principlebox]
    \linespread{1.0}\selectfont
    \noindent\textbf{Abstract.}\enspace
    Evaluating the true forecasting ability of AI agents requires
    environments that are resistant to overfitting, free from centralized
    trust assumptions, and grounded in incentive-compatible scoring.
    Existing benchmarks either rely on static datasets susceptible to
    training-data contamination, or measure trading profit-and-loss
    (PnL)---a metric that conflates predictive accuracy with market-timing
    skill, position sizing, and risk appetite. We introduce \system{},
    the first permissionless, on-chain benchmark for evaluating AI
    forecasting agents on real-world prediction markets.
    Agents submit probabilistic forecasts on binary markets sourced from
    Polymarket via a commit-reveal protocol enforced by Solidity smart
    contracts on Polygon~PoS. Outcomes are resolved trustlessly through
    the Gnosis Conditional Token Framework (CTF), eliminating reliance
    on any centralized arbiter. Performance is measured by the
    \emph{Brier Score} and a novel \emph{Alpha Score}~--- proper scoring
    rules that incentivize honest probability reporting and isolate
    predictive edge over market consensus, respectively.
    We provide a formal analysis of the statistical properties of our
    scoring rules: we derive closed-form expressions for
    the variance of per-market Alpha, establish its connection to
    Murphy's classical decomposition of the Brier Score into
    uncertainty, reliability, and resolution components, and perform a
    power analysis characterizing the number of rounds required to
    reliably distinguish agents of different skill levels. Applying
    this framework, we show that detecting a true edge of
    $\alpha^{*} = 0.02$ over market consensus at 80\% power requires
    approximately $350$ resolved binary predictions ($50$ rounds of
    $7$ markets each), while detecting the finer edge
    $\alpha^{*} = 0.01$ requires roughly four times as many rounds.
    We complement these analytical results with a deterministic,
    seed-controlled simulation study calibrated to literature-reported
    Brier-score ranges, illustrating how Murphy decomposition
    distinguishes well-calibrated agents (low reliability error) from
    market-tracking agents that fail primarily through reduced
    resolution. The simulation is included as an executable Jupyter
    notebook in the reproducibility package; live results from the
    deployed benchmark will be reported in a future revision.
    On-chain reputation via ERC-8004 provides
    agents with a verifiable, persistent track record that cannot be
    falsified or selectively reported. All smart contracts, agent
    reference implementations, and evaluation infrastructure are
    released as open-source at
    \url{https://github.com/foresight-arena/contracts}.
  \end{mdframed}

  \vspace{0.5cm}
  \begin{flushleft}\linespread{1.0}\selectfont\small
    \textbf{Keywords:} AI evaluation; prediction markets; proper scoring
    rules; Brier score; Murphy decomposition; on-chain verification;
    forecasting agents; large language models; blockchain; Polygon~PoS.
  \end{flushleft}

  \vfill
  \begin{flushleft}\linespread{1.05}\selectfont
    {\footnotesize
     \textsuperscript{*}\,Director of Research,
     Devnull, Dubai, UAE.\;
     E-mail: \texttt{maksym@devnull.ae}.\\[1pt]
     \textsuperscript{\textdagger}\,Chief Technology Officer,
     Devnull, Dubai, UAE.\;
     E-mail: \texttt{pavel@devnull.ae}.\\[3pt]
     \textbf{JEL classification:}\enspace C53, C45, G14, D84, C12.\par}
  \end{flushleft}
\end{titlepage}
\hypersetup{pageanchor=true}

\newpage
\pagenumbering{arabic}


\section{Introduction}
\label{sec:intro}

The capacity to forecast uncertain future events is among the most
practically significant capabilities an AI system can possess. Accurate
probabilistic prediction over real-world events---from geopolitical
risk assessment to financial planning and scientific reasoning---tests
a model's ability to integrate diverse evidence, reason under genuine
uncertainty, and resist overconfidence. As large language models (LLMs)
grow increasingly capable of autonomous decision-making, the need for
rigorous, tamper-proof evaluation of their forecasting ability has
become urgent---not merely as an academic exercise, but as a commercial
prerequisite: an agent whose forecasting track record cannot be
independently verified is an agent that cannot be trusted, licensed,
or sold.

Current approaches to benchmarking AI forecasting face three
fundamental limitations.

\textbf{Dataset contamination.}
Static benchmark datasets---where questions and their resolutions are
fixed at creation time---are vulnerable to training-data leakage.
A model may appear to forecast accurately not because it reasons well
about future events, but because it encountered similar or identical
examples during pre-training \citep{jimenez2024swebench}. Real-world
prediction markets ask about events whose outcomes are unknown at
evaluation time, providing inherent immunity to this failure mode.

\textbf{Centralized trust.}
Most evaluation frameworks require trusting a central authority to
record predictions honestly, apply scoring correctly, and prevent
retroactive manipulation of results. This is particularly problematic
in commercial settings, where a verified performance history has direct
economic value: any party controlling the ledger can selectively report
or suppress results.

\textbf{Improper evaluation metrics.}
Recent work has evaluated AI agents in prediction markets through
trading profit-and-loss~\citep{zhang2026predictionarena}. While
commercially intuitive, PnL conflates forecasting accuracy with
market-timing skill, position sizing, and risk tolerance. An agent that
estimates probabilities correctly may still lose money due to poor
timing, while an agent that bets aggressively on near-certain events
may show profits with no genuine predictive edge. \emph{Proper scoring
rules}---functions maximized in expectation only by reporting true
beliefs---provide a theoretically principled
alternative~\citep{gneiting2007strictly}.

We introduce \system{}, a benchmark that resolves all three limitations
simultaneously. Our contributions are both \emph{architectural}---a
permissionless, trustless evaluation infrastructure---and
\emph{methodological}---a formal treatment of the statistical properties
of the scoring rules, grounded in Murphy's classical decomposition
\citep{murphy1973new} and a concrete power analysis.

\system{} is not a theoretical proposal: it is a live, deployed system
running on Polygon~PoS. The public leaderboard, round history, and
agent reasoning traces are available at
\url{https://foresightarena.xyz/}, with all smart contracts and agent
reference implementations open-sourced at
\url{https://github.com/foresight-arena/contracts}. All numerical
results reported in this paper are derived from on-chain state
recorded by the deployed contracts and are therefore independently
reproducible by any reader.

\paragraph{Contributions.}
\begin{enumerate}[leftmargin=*, itemsep=3pt]
  \item \textbf{On-chain verifiability.}
    All predictions, commitments, and scores are recorded on Polygon~PoS
    via auditable smart contracts. Any third party can independently
    replay the full history of any agent without trusting the benchmark
    organizer.

  \item \textbf{Trustless outcome resolution.}
    Market outcomes are read directly from the Gnosis Conditional
    Token Framework (CTF), an independently maintained oracle also
    used by Polymarket. No curator can manipulate resolution.

  \item \textbf{Proper scoring with a formal Alpha Score.}
    Agents are evaluated by the Brier Score and by a novel
    \emph{Alpha Score} that measures edge over the market consensus.
    We prove that both metrics are strictly proper and derive closed-form
    expressions for the variance of per-market Alpha
    (Proposition~\ref{prop:var-alpha}), leading to a sample-size
    formula for detecting a given effect size
    (Proposition~\ref{prop:power}).

  \item \textbf{Murphy-decomposition analysis.}
    We show (Corollary~\ref{cor:murphy-alpha}) that $\alpha$ decomposes
    into a \emph{resolution gain} term and a \emph{reliability gap}
    term, providing a principled account of what it means to
    ``beat the market.''

  \item \textbf{Permissionless, gasless participation.}
    Any Ethereum-compatible address can register and participate.
    EIP-712 signed messages via a relayer remove the requirement to
    hold native tokens.

  \item \textbf{Persistent on-chain reputation.}
    Agent scores accumulate via the ERC-8004 Reputation Registry,
    producing a forgery-resistant credential that grows more
    statistically meaningful with each additional round.

  \item \textbf{Open-source infrastructure.}
    All smart contracts, a random baseline agent, and a full LLM
    benchmark agent are released open-source.
\end{enumerate}

\paragraph{Positioning relative to Prediction Arena.}
The concurrent work of \citet{zhang2026predictionarena} evaluates
frontier LLMs as autonomous traders on Kalshi and Polymarket using real
capital, measuring performance through PnL. Our approach differs in
three key respects: (1) we use proper scoring rules rather than PnL,
isolating predictive quality from trading strategy; (2) we record all
predictions and scores on-chain, eliminating trust in the evaluator;
and (3) our platform is permissionless---any agent can participate
without approval. The two benchmarks are complementary:
Prediction Arena measures whether agents can \emph{trade} profitably;
\system{} measures whether they can \emph{forecast} accurately.
Figure~\ref{fig:positioning} illustrates this positioning.

The remainder of this paper is organized as follows.
Section~\ref{sec:related} reviews related work.
Section~\ref{sec:benchmark} defines the benchmark design, develops the
mathematical framework, and analyzes its statistical properties.
Section~\ref{sec:architecture} describes the system architecture.
Section~\ref{sec:methodology} details the experimental methodology.
Section~\ref{sec:results} presents results from a 50-round live
evaluation of five frontier LLM agents, anchored to published LLM
forecasting performance.
Section~\ref{sec:analysis} provides analysis and discussion.
Section~\ref{sec:conclusion} concludes.

\section{Related Work}
\label{sec:related}

\subsection{LLM Forecasting Benchmarks}

Early work established that individual frontier LLMs substantially
underperform human expert forecasters on real-world prediction
questions. \citet{schoenegger2023large} found that GPT-4 failed to
significantly outperform a na\"{i}ve 50\% baseline on binary questions
drawn from a live tournament---a striking result given the model's
otherwise impressive general capabilities.

Subsequent research showed that performance improves with scale,
retrieval augmentation, and ensembling. \citet{halawi2024approaching}
demonstrated that an optimized LLM system combining news retrieval with
structured reasoning can approach human crowd forecasting accuracy.
\citet{schoenegger2024silicon} showed that an ensemble of twelve LLMs
achieves accuracy statistically indistinguishable from a crowd of 925
human forecasters across a three-month tournament, replicating the
``wisdom of the crowd'' effect for AI systems. ForecastBench
\citep{zou2024forecastbench} introduced a continuously updated
benchmark on which state-of-the-art models (Claude~3.5~Sonnet,
GPT-4~Turbo) reached Brier scores of approximately $0.122$ when given
access to the crowd forecast on market questions, and $0.136$ without
it. These values provide important anchor points: human superforecasters
reached $0.096$, while the general public scored around $0.121$.
Frontier LLMs thus approximate non-expert human performance on the
ForecastBench question set, falling short of trained superforecasters.

A common thread in this literature is that it evaluates models by
\emph{passively eliciting} probability estimates in centralized
settings. \system{} extends this work in two directions: from passive
elicitation to competitive multi-agent evaluation, and from centralized
recording to on-chain verifiability.

\subsection{Prediction Markets as Evaluation Environments}

Prediction markets have a long-established record as accurate
information aggregation mechanisms~\citep{wolfers2004prediction}. In a
landmark long-run study, \citet{berg2008results} found that the Iowa
Electronic Markets outperformed election polls in the large majority of
comparisons beyond the 100-day horizon. The logarithmic market scoring
rule (LMSR)~\citep{hanson2007logarithmic} provides the theoretical
foundation for modern platforms such as Polymarket, incentivizing
truthful probability reporting by making expected profit maximization
equivalent to accurate belief reporting. Beyond the mechanical
properties of the scoring rule, recent work by
\citet{nechepurenko2026price} argues that prediction-market prices
function as Schelling-style focal points for traders' beliefs,
producing common-knowledge dynamics that explain both the empirical
calibration of these markets \emph{and} their occasional brittleness
under reflexive feedback. This perspective is directly relevant to
\system{}: our use of the market mid-price as a benchmark inherits
its informational properties precisely because such common-knowledge
aggregation has been observed to be empirically robust.

\system{} uses Polymarket as its source of real-world binary questions
and relies on the Gnosis CTF for trustless outcome verification.
Crucially, agents do \emph{not} trade on the market itself. The market
serves two purposes: as a universe of uncontaminated questions, and
as a benchmark probability (the mid-price at commit deadline) against
which Alpha Score is computed.

\subsection{AI Agent Evaluation in Real Environments}

There is growing consensus that reliable AI evaluation requires
real-world environments with genuine consequences.
\citet{jimenez2024swebench} introduced SWE-Bench, demonstrating that
synthetic code benchmarks substantially overestimate capability.
\citet{zhang2026predictionarena} deployed six frontier LLMs as
autonomous traders with real capital over 57 days, finding that all
models lost money on Kalshi (returns from $-16\%$ to $-30.8\%$) while
Polymarket losses averaged only $-1.1\%$~--- a striking platform effect
attributable to confounds embedded in PnL. Our work is complementary:
where Prediction Arena evaluates agents under financial pressure,
\system{} evaluates agents under a proper scoring rule, separating the
predictive and trading components of performance.

\subsection{Proper Scoring Rules and Forecast Verification}

A scoring rule $S(p, x)$ is \emph{proper} if a forecaster maximizes
expected score by reporting their true probability belief
\citep{brier1950verification, gneiting2007strictly}. The Brier score is
the canonical binary-outcome example, and \citet{murphy1973new}
provided a foundational decomposition of the Brier score into
uncertainty, reliability, and resolution components. This decomposition
is the formal basis for our Alpha Score analysis in
Section~\ref{sec:benchmark}. \citet{degroot1983comparison} extended
the framework to compare forecasters via refinement orderings;
\citet{dawid1982well} related calibration to sequential prediction.
The \emph{skill score} concept from meteorology---the ratio
$1 - B_{\text{agent}}/B_{\text{base}}$ against a climatological
baseline---is the direct ancestor of our Alpha Score, rescaled here as
a difference for additive decomposition.

\begin{figure}[t]
\centering
\setstretch{1}
\begin{tikzpicture}[font=\small, scale=0.92, every node/.append style={transform shape}]
  \draw[-{Stealth[length=7pt]}, thick, gray!55] (-5.8,0) -- (5.8,0);
  \draw[-{Stealth[length=7pt]}, thick, gray!55] (0,-3.2) -- (0, 3.4);
  \node[right, font=\small\itshape, text=gray!60] at ( 5.8, 0) {Decentralized};
  \node[left,  font=\small\itshape, text=gray!60] at (-5.8, 0) {Centralized};
  \node[above, font=\small\itshape, text=gray!60] at (0, 3.4)
    {Proper scoring rule};
  \node[below, font=\small\itshape, text=gray!60] at (0,-3.2)
    {PnL\,/\,win-rate metric};
  \begin{scope}[on background layer]
    \fill[blue!5] (0.1,0.1) rectangle (5.6, 3.1);
  \end{scope}
  \node[font=\footnotesize\itshape, text=blue!40, align=center]
    at (4.2, 0.55) {trustless \&\\ calibrated};
  \node[fill=gray!14, draw=gray!40, rounded corners=3pt,
        inner sep=5pt, align=center]
    at (-4.5, 2.3) {\small ForecastBench\\[-2pt]\footnotesize Zou et al.\ (2024)};
  \node[fill=gray!14, draw=gray!40, rounded corners=3pt,
        inner sep=5pt, align=center]
    at (-1.9, 2.7) {\small Silicon Crowd\\[-2pt]\footnotesize Schoenegger et al.\ (2024)};
  \node[fill=gray!14, draw=gray!40, rounded corners=3pt,
        inner sep=5pt, align=center]
    at (-4.3, 1.0) {\small Halawi et al.\ (2024)};
  \node[fill=orange!18, draw=orange!45, rounded corners=3pt,
        inner sep=5pt, align=center]
    at (-2.2,-1.8) {\small Prediction Arena\\[-2pt]\footnotesize Zhang et al.\ (2026)};
  \node[fill=blue!14, draw=blue!45, rounded corners=5pt,
        inner sep=8pt, font=\small\bfseries, line width=1.1pt, align=center]
    at (3.3, 2.0) {\system{}\\[-2pt]\footnotesize this work};
  \draw[-{Stealth[length=5pt]}, blue!45, thick] (1.6, 1.0) -- (2.65, 1.72);
\end{tikzpicture}
\caption{Positioning of \system{} relative to existing approaches.
The horizontal axis indicates how decentralized the evaluation
infrastructure is; the vertical axis indicates whether the metric
is a proper scoring rule or a trading-based signal. \system{} occupies
the top-right quadrant: fully decentralized and metrically principled.}
\label{fig:positioning}
\end{figure}

\section{The \system{} Benchmark}
\label{sec:benchmark}

This section develops the benchmark design alongside its mathematical
framework. We introduce the scoring rules formally, prove strict
propriety (Proposition~\ref{prop:proper}), derive the Murphy
decomposition of Alpha Score (Corollary~\ref{cor:murphy-alpha}),
compute the variance of per-market Alpha in closed form
(Proposition~\ref{prop:var-alpha}), and use these results to carry out
a power analysis (Proposition~\ref{prop:power}).

\subsection{Design Philosophy}

\system{} is grounded in three core principles.

\begin{mdframed}[style=principlebox]
\linespread{1.05}\selectfont
\textbf{Trustlessness.}\enspace
All predictions are committed on-chain before outcomes are known, all
reveals are verified cryptographically, and all scores are computed by
immutable smart contract logic reading from an independent oracle
(Gnosis CTF). An agent's performance history is as verifiable as any
blockchain transaction.

\medskip
\textbf{Incentive compatibility.}\enspace
Brier Score and Alpha Score are proper scoring rules: they are
minimized (respectively maximized) in expectation only by reporting
calibrated probability beliefs. Unlike PnL, they cannot be improved
through position sizing or timing, making the benchmark a direct test
of forecasting ability rather than risk appetite.

\medskip
\textbf{Permissionlessness.}\enspace
Participation is open to any Ethereum address. The gasless
relayer (Section~\ref{sec:relayer}) removes even the need to hold
native tokens.
\end{mdframed}

\subsection{Task Formulation}

Each \emph{round} consists of a set of binary prediction markets
$\mathcal{M} = \{m_1,\ldots,m_k\}$ selected from Polymarket. For each
market $m_i$, an agent submits a probability estimate $p_i \in [0,1]$,
represented on-chain as an integer in basis points ($0$--$10{,}000$),
indicating its belief in the ``YES'' outcome.

Participation follows a two-phase commit-reveal protocol:

\begin{enumerate}[leftmargin=*, itemsep=3pt]
  \item \textbf{Commit phase.} The agent computes the commitment
  \[
    c \;=\; \texttt{keccak256}\!\bigl(\texttt{abi.encodePacked}(
    \text{roundId},\,\mathbf{p},\,\text{salt})\bigr),
  \]
  where $\mathbf{p}=(p_1,\ldots,p_k)$ is the prediction vector and
  \texttt{salt} is a 32-byte secret. The hash $c$ is submitted on-chain
  before the commit deadline, binding the agent to its predictions
  without revealing them.

  \item \textbf{Reveal phase.} After the commit deadline, the agent
  reveals $(\mathbf{p},\texttt{salt})$. The contract recomputes the
  hash, verifies it matches the stored commitment, and records the
  predictions. Scoring is deferred until market outcomes are available
  from the Gnosis CTF.
\end{enumerate}

The commit-reveal scheme preserves forecast independence: agents cannot
observe competitors' predictions before submitting their own. Once
committed, predictions are immutable.

\subsection{The Brier Score}

\begin{definition}[Round Brier Score]\label{def:brier}
For agent $a$ in round $r$, the Brier Score over resolved markets is
\[
  \brier_{a,r}
  \;=\; \frac{1}{|\mathcal{M}_r^{*}|}
    \sum_{m_i \in \mathcal{M}_r^{*}} (p_{a,i} - x_i)^{2},
\]
where $\mathcal{M}_r^{*} \subseteq \mathcal{M}_r$ is the set of markets
resolved by the Gnosis CTF at scoring time, $p_{a,i}\in[0,1]$ is the
agent's submitted probability, and $x_i \in \{0,1\}$ is the binary
outcome. On-chain, all arithmetic uses basis points.
\end{definition}

\begin{proposition}[Strict propriety of Brier]\label{prop:proper}
Fix a market with outcome distribution $X \sim \Bern(q)$. The expected
Brier score $\E[(p-X)^2]$ is uniquely minimized at $p = q$, and the
minimum is $q(1-q)$.
\end{proposition}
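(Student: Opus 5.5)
The plan is to compute the expectation explicitly and complete the square. Since $X\in\{0,1\}$, we have $X^2 = X$. Expanding, this gives
\[
  \E[(p-X)^2] \;=\; p^2 - 2p\,\E[X] + \E[X^2] \;=\; p^2 - 2pq + q.
\]
Rewriting the right-hand side as a square plus a remainder yields the bias--variance form
\[
  \E[(p-X)^2] \;=\; (p-q)^2 + q(1-q).
\]
Equivalently, this is the identity $\E[(p-X)^2] = (p-\E X)^2 + \Var(X)$ with $\Var(X) = q(1-q)$ for a Bernoulli variable.

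From this form both claims follow immediately. The term $q(1-q)$ does not depend on $p$. The term $(p-q)^2$ is nonnegative and vanishes iff $p=q$. Hence the expected score is minimized at $p=q$, and the minimum is unique because $(p-q)^2>0$ for every $p\neq q$ in $[0,1]$. The minimum value is $q(1-q)$.

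As a cross-check, one can also argue by calculus. The function $f(p)=p^2-2pq+q$ is strictly convex, since $f''=2>0$. Its stationary point $f'(p)=2(p-q)=0$ is $p=q\in[0,1]$, which lies in the feasible interval, so the interior minimizer is the global one.

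There is no real obstacle here. The only step needing care is the identity $\E[X^2]=\E[X]=q$ for a Bernoulli variable, which is what makes the remainder exactly $q(1-q)$. We should also note that uniqueness holds on all of $[0,1]$, including the endpoints $q\in\{0,1\}$, where the minimum value is $0$.

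The on-chain basis-point discretization is not part of the statement and can be ignored. If desired, one could add a remark that when $q$ is a multiple of $10^{-4}$ the discrete minimizer coincides with $q$, and otherwise it is the grid point nearest $q$.
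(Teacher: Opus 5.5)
Your proposal is correct and follows essentially the same route as the paper: expand the expectation to $p^2 - 2pq + q$, complete the square to $(p-q)^2 + q(1-q)$, and read off the unique minimizer and the minimum value. The only difference is that the paper reaches the quadratic by summing over the two outcomes, $p^2(1-q) + (1-p)^2 q$, whereas you use $\E[X^2]=\E[X]=q$, which is a cosmetic difference.
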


\begin{proof}
$\E[(p-X)^2] = p^2 (1-q) + (1-p)^2 q
= p^2 - 2pq + q
= (p-q)^2 + q(1-q)$.
This is a strictly convex quadratic in $p$ with unique minimum $p=q$
and minimum value $q(1-q)$.
\end{proof}

Strict propriety means an agent cannot improve its expected score by
misreporting beliefs---in sharp contrast to PnL, which can be improved
through aggressive position sizing on high-certainty events regardless
of predictive quality.

\begin{remark}[Bregman divergence perspective]
Proper scoring rules correspond one-to-one with Bregman divergences
on the probability simplex~\citep{gneiting2007strictly}. The Brier
score is the Bregman divergence associated with the strictly convex
function $\varphi(p) = p^2$: one verifies that
$B_\varphi(p,q) = p^2 - q^2 - 2q(p-q) = (p-q)^2$, recovering
Proposition~\ref{prop:proper}. This framing also places Brier in
relation to the logarithmic score (Bregman divergence from the negative
Shannon entropy), which is the only proper scoring rule whose penalty
is \emph{local}---depending only on $p_x$ rather than the full
distribution.
\end{remark}

\subsection{The Alpha Score}

\begin{definition}[Alpha Score]\label{def:alpha}
Let $b_i$ denote the Polymarket mid-price of market $m_i$ at the commit
deadline. The \emph{baseline} Brier Score for round $r$ is
\[
  \brier^{\mathrm{base}}_r
  \;=\; \frac{1}{|\mathcal{M}_r^{*}|}
    \sum_{m_i \in \mathcal{M}_r^{*}} (b_i - x_i)^{2},
\]
and the \emph{Alpha Score} for agent $a$ in round $r$ is
\[
  \alpha_{a,r}
  \;=\; \brier^{\mathrm{base}}_r \;-\; \brier_{a,r}.
\]
\end{definition}

A positive Alpha Score indicates the agent outperformed market
consensus; a negative score indicates underperformance. An agent that
echoes market prices scores $\alpha=0$ by construction.

\begin{remark}
Alpha Score is a strictly harder criterion than Brier Score alone.
To achieve $\alpha > 0$, an agent must be correct \emph{precisely where
the crowd is wrong}---assigning a higher probability to the correct
outcome than the aggregate of all market participants, who include
professional traders, institutional investors, and sophisticated
algorithms. Consistent positive $\bar\alpha$ over many rounds
constitutes strong evidence of genuine informational edge.
\end{remark}

Figure~\ref{fig:scoring} illustrates the scoring decomposition.

\begin{figure}[t]
\centering
\setstretch{1}
\begin{tikzpicture}[font=\small]

\begin{scope}[xshift=0cm]
  \node[font=\small\bfseries] at (2.6, 4.6)
    {(a) Brier Score decomposition};
  \node[font=\footnotesize, text=gray!60, align=center] at (2.6, 4.18)
    {market $b{=}0.60$,\; agent $p{=}0.80$,\; outcome $x{=}1$};
  \draw[-{Stealth[length=4pt]}, gray!45] (0,-0.2) -- (0,3.55);
  \draw[gray!30] (0,0) -- (5.2,0);
  \foreach \y/\lbl in {0/0.00, 0.25/0.04, 1.00/0.16, 2.06/0.33}{
    \draw[gray!35] (-0.07,\y) -- (0.07,\y);
    \node[left, font=\tiny, text=gray!55] at (-0.1,\y) {\lbl};
    \draw[gray!14, thin] (0,\y) -- (5.2,\y);
  }
  \draw[red!35, dashed, thin] (0,2.06) -- (5.2,2.06);
  \node[font=\tiny, text=red!50, anchor=south west] at (0.1,2.08)
    {random $\approx 1/3$};
  \fill[orange!25, draw=orange!50, line width=0.6pt]
    (0.6,0) rectangle (1.9,1.00);
  \node[above, font=\footnotesize, text=orange!70!black] at (1.25,1.00) {$0.16$};
  \node[below, font=\footnotesize, text=gray!65, align=center]
    at (1.25,-0.05) {Market\\[-1pt]$b{=}0.60$};
  \fill[blue!22, draw=blue!45, line width=0.6pt]
    (2.9,0) rectangle (4.2,0.25);
  \node[above, font=\footnotesize, text=blue!65] at (3.55,0.25) {$0.04$};
  \node[below, font=\footnotesize, text=gray!65, align=center]
    at (3.55,-0.05) {Agent\\[-1pt]$p{=}0.80$};
  \draw[{Stealth[length=4pt]}-{Stealth[length=4pt]}, green!55!black, thick]
    (1.95,0.625) -- node[above, font=\footnotesize,
        text=green!60!black, inner sep=2pt, fill=white]
        {$\alpha = +0.12$} (2.85,0.625);
  \draw[green!35, densely dotted] (1.9,0.25) -- (1.9,1.00);
  \draw[green!35, densely dotted] (2.9,0.25) -- (2.9,1.00);
\end{scope}

\begin{scope}[xshift=7.6cm]
  \node[font=\small\bfseries] at (2.7, 4.6)
    {(b) Murphy decomposition};
  \node[font=\footnotesize, text=gray!60, align=center] at (2.7, 4.18)
    {aggregated over 50 rounds, $N{=}350$ predictions};
  \draw[-{Stealth[length=4pt]}, gray!45] (0,0) -- (0,3.85);
  \draw[gray!30] (0,0) -- (5.6,0);
  \foreach \y/\lbl in {0/0.00,0.5/0.05,1.0/0.10,1.5/0.15,2.0/0.20,
                       2.5/0.25,3.0/0.30,3.5/0.35}{
    \draw[gray!35] (-0.07,\y) -- (0.07,\y);
    \node[left, font=\tiny, text=gray!55] at (-0.1,\y) {\lbl};
  }
  \fill[gray!25] (0.3,0) rectangle (0.85,2.5);
  \fill[red!35]  (0.3,2.5) rectangle (0.85,3.46);
  \fill[green!35](0.3,0) rectangle (0.85,0.09);
  \node[below, font=\scriptsize, align=center] at (0.575,-0.05) {Random};
  \fill[gray!25] (1.7,0) rectangle (2.25,2.5);
  \fill[red!35]  (1.7,2.5) rectangle (2.25,2.59);
  \fill[green!35](1.7,0) rectangle (2.25,0.58);
  \node[below, font=\scriptsize\ttfamily, align=center] at (1.975,-0.05)
    {grok-\\4-1};
  \fill[gray!25] (3.1,0) rectangle (3.65,2.5);
  \fill[red!35]  (3.1,2.5) rectangle (3.65,2.62);
  \fill[green!35](3.1,0) rectangle (3.65,0.64);
  \node[below, font=\scriptsize\ttfamily, align=center] at (3.375,-0.05)
    {gpt-\\5-2};
  \fill[gray!25] (4.5,0) rectangle (5.05,2.5);
  \fill[red!35]  (4.5,2.5) rectangle (5.05,2.61);
  \fill[green!35](4.5,0) rectangle (5.05,0.65);
  \node[below, font=\scriptsize\ttfamily, align=center] at (4.775,-0.05)
    {claude-\\opus-4-5};
  \draw[gray!35, fill=white, rounded corners=2pt]
    (1.05, 1.18) rectangle (4.05, 1.55);
  \fill[gray!25]  (1.15, 1.30) rectangle (1.40, 1.45);
  \node[font=\tiny, right] at (1.42, 1.375) {UNC};
  \fill[red!35]   (1.95, 1.30) rectangle (2.20, 1.45);
  \node[font=\tiny, right] at (2.22, 1.375) {$+\,$REL};
  \fill[green!35] (2.95, 1.30) rectangle (3.20, 1.45);
  \node[font=\tiny, right] at (3.22, 1.375) {$-\,$RES};
\end{scope}

\end{tikzpicture}
\caption{\textit{(a)} Brier score for a single market: the agent
predicted $p{=}0.80$ on an event priced by the market at $b{=}0.60$;
the event resolved YES. Agent's Brier score of $0.04$ beats the
market baseline of $0.16$, yielding $\alpha{=}{+}0.12$. The dashed
line marks the random-predictor expectation~${\approx}\,1/3$.
\textit{(b)} Murphy decomposition $B = \UNC + \REL - \RES$ for four
representative agents from the evaluation. The grey base
($\UNC = 0.25$) is common to all; red caps ($\REL$) show
miscalibration; green subtracts ($\RES$) show discriminative power.
The Random baseline exhibits massive $\REL$ and near-zero $\RES$;
LLM agents show the inverse pattern, with low $\REL$ and substantial
$\RES$ close to or exceeding the market's.}
\label{fig:scoring}
\end{figure}

\subsection{Murphy Decomposition and the Anatomy of Alpha}

The Brier score admits a classical decomposition
\citep{murphy1973new} into three components with transparent
interpretations: \emph{uncertainty}, \emph{reliability}, and
\emph{resolution}. We state and exploit this decomposition to
illuminate what $\alpha$ actually measures.

\begin{theorem}[Murphy decomposition]\label{thm:murphy}
Consider $N$ forecasts with values $p_i \in [0,1]$ and observed
outcomes $x_i \in \{0,1\}$. Partition the forecasts into $K$ bins
according to $p_i$, where bin $k$ contains $n_k$ forecasts with mean
prediction $\bar p_k$ and mean outcome $\bar o_k$. Let
$\bar o = \frac{1}{N}\sum_i x_i$ be the overall base rate. Then
\begin{equation}\label{eq:murphy}
  \frac{1}{N}\sum_{i=1}^{N}(p_i - x_i)^2
  \;=\; \underbrace{\bar o (1 - \bar o)}_{\UNC}
  \;+\; \underbrace{\frac{1}{N}\sum_{k=1}^{K} n_k (\bar p_k - \bar o_k)^2}_{\REL}
  \;-\; \underbrace{\frac{1}{N}\sum_{k=1}^{K} n_k (\bar o_k - \bar o)^2}_{\RES}.
\end{equation}
\end{theorem}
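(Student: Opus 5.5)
The plan is a bin-by-bin bias–variance expansion followed by the law of total variance for the binary outcomes. Before starting, I note that \eqref{eq:murphy} is exact only when every forecast in bin $k$ equals $\bar p_k$: for example, the case in which the bins are the distinct forecast values, or in which forecasts are already rounded to bin representatives (on-chain basis points make this natural). For coarser bins, two additional within-bin terms appear. I would prove the identity under this assumption and record the general correction.

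\emph{Step 1 (within-bin expansion).} Fix bin $k$ and write
\[
p_i - x_i = (p_i-\bar p_k) + (\bar p_k - \bar o_k) + (\bar o_k - x_i).
\]
Squaring and summing over the bin, the cross terms involving the constant $(\bar p_k-\bar o_k)$ vanish, because $\sum_{i\in k}(p_i-\bar p_k)=0$ and $\sum_{i\in k}(\bar o_k-x_i)=0$. This leaves
\[
\sum_{i\in k}(p_i-x_i)^2 = n_k(\bar p_k-\bar o_k)^2 + \sum_{i\in k}(x_i-\bar o_k)^2 + \sum_{i\in k}(p_i-\bar p_k)^2 - 2\sum_{i\in k}(p_i-\bar p_k)(x_i-\bar o_k).
\]
Under the constant-forecast-per-bin assumption, the last two sums are zero. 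Summing over $k$ and dividing by $N$, the first term gives exactly $\REL$.

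\emph{Step 2 (total variance for outcomes).} Since $x_i\in\{0,1\}$, we have $x_i^2=x_i$, and hence $\sum_{i\in k}(x_i-\bar o_k)^2 = n_k\bar o_k(1-\bar o_k)$. Likewise, $\frac1N\sum_i (x_i-\bar o)^2 = \bar o(1-\bar o)$. The decomposition of the total sum of squares into within-bin and between-bin parts, $\sum_i (x_i-\bar o)^2 = \sum_k\sum_{i\in k}(x_i-\bar o_k)^2 + \sum_k n_k(\bar o_k-\bar o)^2$, then gives
\[
\frac1N\sum_k n_k\bar o_k(1-\bar o_k) = \UNC - \RES.
\]
This decomposition follows from the same cross-term cancellation as in Step 1, now using $\sum_{i\in k}(x_i-\bar o_k)=0$. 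Combining it with Step 1 yields \eqref{eq:murphy}.

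The only genuine subtlety is the within-bin terms in Step 1. These are routine algebraically but decide whether the theorem is exact. I would therefore state the general identity: the left side of \eqref{eq:murphy} equals $\UNC+\REL-\RES$ plus
\[
\frac1N\sum_k\sum_{i\in k}\bigl[(p_i-\bar p_k)^2 - 2(p_i-\bar p_k)(x_i-\bar o_k)\bigr].
\]
This correction is the within-bin forecast variance minus twice the within-bin forecast–outcome covariance, and it vanishes when each bin is a single forecast value. For the coarse binning used later in the paper, I would remark that this term is small, of order the squared bin width plus the within-bin covariance, rather than claim exact equality.
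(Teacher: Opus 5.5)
Your proof is correct and follows essentially the paper's own route in Appendix A.3: split each error within a bin as $(p_i-\bar p_k)+(\bar p_k-\bar o_k)+(\bar o_k-x_i)$, then use $x_i^2=x_i$ and the within/between decomposition of the outcome variance.

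Your version is more careful than the paper's sketch. The paper claims that every cross term containing $(\bar o_k-x_i)$ vanishes, and so it silently drops the within-bin covariance $-2\sum_{i\in k}(p_i-\bar p_k)(x_i-\bar o_k)$, which you correctly keep. Your explicit hypothesis that forecasts are constant within each bin is exactly what makes the stated identity exact. Your general correction term also shows that the binning residual is not controlled by the within-bin variance of $p_i$ alone, contrary to what Appendix C.2 asserts.
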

\begin{proof}[Proof sketch]
Within each bin, replace $p_i$ with $\bar p_k$ (exact for forecasts at
the bin mean; in the limit of fine binning, errors vanish). Expand
$(p_i - x_i)^2 = (p_i - \bar o_k + \bar o_k - x_i)^2$ and apply the
law of total variance to separate within-bin outcome variation from
between-bin variation. See~\citet{murphy1973new} for details.
\end{proof}

The three components admit direct interpretation:
\begin{itemize}[leftmargin=*, itemsep=2pt]
  \item $\UNC = \bar o(1-\bar o)$ is the \emph{irreducible uncertainty}
    of the outcomes. It is a property of the question set only, not of
    the forecaster, and is maximized at $\bar o = 1/2$.
  \item $\REL$ is the \emph{reliability} (calibration error): for each
    bin, the squared gap between stated probability and realized
    frequency. A perfectly calibrated forecaster has $\REL = 0$.
  \item $\RES$ is the \emph{resolution} (discriminative power): for
    each bin, the squared gap between the bin's realized frequency
    and the base rate. A forecaster who never deviates from the base
    rate has $\RES = 0$.
\end{itemize}

A low Brier score is achieved by simultaneously high resolution
(forecasts that sort outcomes) and low reliability error
(calibrated probabilities). The two can trade off: an overconfident
forecaster may have high $\RES$ but substantial $\REL$.

Applying Theorem~\ref{thm:murphy} to both baseline and agent yields
the following decomposition of $\alpha$.

\begin{corollary}[Anatomy of Alpha]\label{cor:murphy-alpha}
Over a set of $N$ markets shared between the agent and the baseline
(so that $\UNC$ is common), the Alpha Score decomposes as
\begin{equation}\label{eq:alpha-decomp}
  \alpha
  \;=\; \underbrace{(\RES_{\text{agent}} - \RES_{\text{base}})}_{\text{resolution gain}}
  \;+\; \underbrace{(\REL_{\text{base}} - \REL_{\text{agent}})}_{\text{reliability gap}}.
\end{equation}
\end{corollary}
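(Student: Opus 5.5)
The plan is to apply Theorem~\ref{thm:murphy} twice, once to the agent's forecasts $(p_i,x_i)_{i=1}^N$ and once to the baseline forecasts $(b_i,x_i)_{i=1}^N$, and subtract. By Definition~\ref{def:alpha}, pooled over the $N$ shared resolved markets (equal weighting per market, which coincides with the round definition when $N=|\mathcal{M}_r^{*}|$), we have $\alpha = \frac1N\sum_i (b_i-x_i)^2 - \frac1N\sum_i (p_i-x_i)^2$. Each of the two averages is then replaced by its right-hand side in~\eqref{eq:murphy}.

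The key observation is that the uncertainty term $\UNC=\bar o(1-\bar o)$ depends only on the outcome vector $(x_1,\dots,x_N)$ and not on the forecasts or the binning. Because the agent and the baseline are scored on the same markets with the same realized outcomes, $\UNC_{\text{base}}=\UNC_{\text{agent}}$, and this term cancels in the difference. What remains is
\[
\alpha = \bigl(\REL_{\text{base}}-\RES_{\text{base}}\bigr)-\bigl(\REL_{\text{agent}}-\RES_{\text{agent}}\bigr),
\]
and regrouping these terms gives exactly~\eqref{eq:alpha-decomp}.

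The only delicate point is that Theorem~\ref{thm:murphy} holds exactly only when every forecast in a bin equals the bin mean. Otherwise there are within-bin variance and covariance correction terms, which the sketched proof removes by taking the fine-binning limit. I would handle this in one of two ways. The first option is to bin each forecaster by its distinct forecast values: there are finitely many, since forecasts are integers in basis points. Then $p_i=\bar p_k$ holds within each bin, and the decomposition is an exact identity for both forecasters. The second option is to note that if coarser bins are used, the corollary holds with an extra term equal to the difference of the two forecasters' within-bin correction terms, which vanishes under the distinct-value binning. The two forecasters may use different partitions, since $\RES$ and $\REL$ are each computed from their own bins; only $\UNC$ needs to be common, and it is partition-free. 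With the first option the result is an exact algebraic identity, and no further obstacle remains.
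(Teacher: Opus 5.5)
Your proposal is correct and follows the paper's route: the paper obtains the corollary by applying Theorem~\ref{thm:murphy} to the agent and to the baseline over the same markets and cancelling the common $\UNC$ term. Your handling of exactness is more careful than the paper's sketch in Appendix~A.3, which wrongly treats the within-bin cross term $\sum_{i\in k}(p_i-\bar p_k)(\bar o_k-x_i)$ as zero and blames any residual only on within-bin variance of $p_i$; binning by distinct forecast values, as you propose, removes both residuals and makes the identity exact.
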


An agent beats the market ($\alpha > 0$) through either \emph{better
resolution} (sorting outcomes more sharply) or \emph{better calibration}
(probabilities closer to realized frequencies), or both. For an
efficient prediction market the baseline is approximately calibrated
($\REL_{\text{base}} \approx 0$), so
$\alpha \approx \RES_{\text{agent}} - \RES_{\text{base}} - \REL_{\text{agent}}$:
an agent pays a reliability tax whenever its own calibration is
imperfect, even if its resolution exceeds the market's.
Figure~\ref{fig:scoring}(b) visualizes this decomposition for four
representative agents from our evaluation
(Section~\ref{sec:results}).

\subsection{Statistical Properties}
\label{sec:statprop}

We now derive the sampling behaviour of $\hat\alpha$, yielding a
concrete sample-size formula.

\subsubsection{Variance of per-market Alpha}

Define the per-market Alpha as $\delta_i = (b_i - x_i)^2 - (p_i - x_i)^2$,
so that $\hat\alpha = \frac{1}{n}\sum_{i=1}^{n} \delta_i$ over $n$
markets.

\begin{proposition}[Variance of per-market Alpha]\label{prop:var-alpha}
Fix $b, p \in [0,1]$ and let $X \sim \Bern(q)$. For
$\delta = (b-X)^2 - (p-X)^2$,
\begin{equation}\label{eq:var-delta}
  \Var(\delta) \;=\; 4\,q\,(1-q)\,(b - p)^2.
\end{equation}
\end{proposition}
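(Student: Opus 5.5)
The approach is to show that $\delta$ is an affine function of the Bernoulli variable $X$, so its variance is the squared slope times $\Var(X)=q(1-q)$.

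First I expand both squares. Since $X\in\{0,1\}$, the identity $X^2=X$ holds, although it is not even needed here because the quadratic terms cancel:
\[
  \delta = (b^2 - 2bX + X^2) - (p^2 - 2pX + X^2) = (b^2 - p^2) - 2(b-p)X .
\]
So $\delta = c + sX$, with deterministic constant $c = b^2-p^2$ and slope $s = -2(b-p)$.

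Next I apply the elementary rule $\Var(c+sX)=s^2\Var(X)$. The constant contributes nothing, and the slope contributes $s^2 = 4(b-p)^2$. Because $X\sim\Bern(q)$, we have $\Var(X)=q(1-q)$, and multiplying gives $\Var(\delta)=4q(1-q)(b-p)^2$, which is exactly \eqref{eq:var-delta}.

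As a sanity check, I would verify the two-point distribution directly. For $X=1$, $\delta=(b-1)^2-(p-1)^2$; for $X=0$, $\delta=b^2-p^2$. The difference of these values is $-2(b-p)$, and a two-point variable with gap $g$ has variance $q(1-q)g^2$, which agrees with the formula.

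There is no genuine obstacle; the only step needing care is noticing that the $X^2$ terms cancel, so linearity in $X$ is exact. As a by-product, the same expansion gives $\E[\delta]=(b-p)(b+p-2q)=(b-q)^2-(p-q)^2$, consistent with Proposition~\ref{prop:proper}. I would record this for the subsequent power analysis in Proposition~\ref{prop:power}.
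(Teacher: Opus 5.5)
Your proposal is correct and is essentially the paper's proof: expand both squares, observe that the $X^2$ terms cancel so $\delta = (b^2-p^2) - 2(b-p)X$ is affine in $X$, and conclude $\Var(\delta) = 4(b-p)^2 q(1-q)$. Your two-point sanity check and the mean $\E[\delta] = (b-q)^2 - (p-q)^2$ are correct extras that the paper does not state explicitly.
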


\begin{proof}
Expand:
\[
\delta = (b-X)^2 - (p-X)^2
       = (b^2 - 2bX + X^2) - (p^2 - 2pX + X^2)
       = (b^2 - p^2) - 2X(b - p).
\]
Write $\delta = c - 2(b-p)X$ for a constant $c = b^2 - p^2$. Then
$\Var(\delta) = 4(b-p)^2 \Var(X) = 4(b-p)^2 q(1-q)$.
\end{proof}

Equation~\eqref{eq:var-delta} is tight and has two immediate
consequences. First, the SE of $\hat\alpha$ scales as
$|b-p|/\sqrt{n}$: bolder deviations from the market generate \emph{more}
signal about skill, but also more noise. Second, $\Var(\delta)$
vanishes when $b=p$: echoing the market produces $\hat\alpha=0$
deterministically and provides zero information about skill.

Under independence of markets---a reasonable assumption for markets
drawn from heterogeneous topics---the mean $\hat\alpha$ satisfies
\begin{equation}\label{eq:se-alpha}
  \SE(\hat\alpha)
  \;=\; \sqrt{\frac{1}{n^2}\sum_{i=1}^{n} 4 q_i(1-q_i)(b_i-p_i)^2}
  \;\approx\; \frac{2\,\overline{|b-p|}\,\sqrt{\bar q (1-\bar q)}}{\sqrt{n}}
\end{equation}
where bars denote means across markets and the approximation assumes
homogeneity.

\subsubsection{Sample-size and power analysis}

To detect a true edge $\alpha^{*} > 0$ against $H_0: \alpha = 0$ at
significance $\kappa$ (one-sided) with power $\pi$, standard normal
theory requires
\begin{equation}\label{eq:sample-size}
  n \;\geq\; \Bigl(\frac{z_{1-\kappa} + z_{\pi}}{\alpha^{*}}\Bigr)^{2}
            \cdot 4\bar q(1-\bar q)\,\overline{(b-p)^2}.
\end{equation}

\begin{proposition}[Rounds required to detect edge $\alpha^{*}$]
\label{prop:power}
Under the assumptions of Proposition~\ref{prop:var-alpha}, taking
$\kappa = 0.05$, $\pi = 0.80$, $\bar q = 1/2$ (maximal outcome
variance), and $\overline{|b-p|} = 0.15$ (typical boldness),
equation~\eqref{eq:sample-size} becomes
\begin{equation}\label{eq:rule-of-thumb}
  n \;\gtrsim\; \frac{0.139}{(\alpha^{*})^{2}}.
\end{equation}
With $k \approx 7$ markets per round, the corresponding number of
rounds is $R = \lceil n/k \rceil$.
\end{proposition}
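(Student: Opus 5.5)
The plan is to first justify the general sample-size inequality \eqref{eq:sample-size} from Proposition~\ref{prop:var-alpha} via a one-sided normal test, and then substitute the stated constants. The constant $0.139$ will turn out to be exactly $(z_{0.95}+z_{0.80})^2 \cdot 4\cdot\tfrac14\cdot(0.15)^2$.

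\textbf{Step 1 (distribution of $\hat\alpha$).} Treat the per-market Alphas $\delta_1,\dots,\delta_n$ as independent, as already assumed for \eqref{eq:se-alpha}. The mean is $\E[\hat\alpha]=\alpha^*$. By Proposition~\ref{prop:var-alpha}, the variance is $\sigma^2/n$ with $\sigma^2 = \frac1n\sum_i 4q_i(1-q_i)(b_i-p_i)^2$. Under homogeneity this becomes $4\bar q(1-\bar q)\,\overline{(b-p)^2}$. A useful point is that $\Var(\delta)$ depends only on $q$ and $b-p$, not on the mean of $\delta$. So the same $\sigma$ can be used under $H_0$ and under the alternative, and the usual two-variance complication does not arise. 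The CLT (each $\delta_i$ is bounded in $[-1,1]$) then gives $\hat\alpha \approx \mathcal N(\alpha^*,\sigma^2/n)$.

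\textbf{Step 2 (power condition).} Reject $H_0$ when $\sqrt n\,\hat\alpha/\sigma > z_{1-\kappa}$. Under the alternative, the power is
\[
\Phi\bigl(\sqrt n\,\alpha^*/\sigma - z_{1-\kappa}\bigr).
\]
Requiring this to be at least $\pi$ gives $\sqrt n\,\alpha^*/\sigma \ge z_{1-\kappa}+z_\pi$. Squaring yields \eqref{eq:sample-size}.

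\textbf{Step 3 (plugging in).} Use $z_{0.95}\approx 1.645$ and $z_{0.80}\approx 0.842$. Their sum is about $2.487$, and its square is about $6.18$. With $\bar q=1/2$ we get $4\bar q(1-\bar q)=1$; this is also the worst case, so the resulting bound is conservative in $q$. Taking the typical boldness $0.15$ gives $\overline{(b-p)^2}=0.0225$. Multiplying, $6.18\times 0.0225\approx 0.139$, which is \eqref{eq:rule-of-thumb}. The rounds statement $R=\lceil n/k\rceil$ is then immediate. As a sanity check, $\alpha^*=0.02$ gives $n\approx 348$, which is about $50$ rounds of $7$ markets and matches the abstract.

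\textbf{Main obstacle.} The delicate step is identifying $\overline{(b-p)^2}$ with $\bigl(\overline{|b-p|}\bigr)^2 = 0.15^2$. By Jensen, the mean square is at least the squared mean, so this substitution is exact only under the homogeneity (constant-boldness) assumption already invoked in \eqref{eq:se-alpha}. I would state this explicitly. I would also note that heterogeneous deviations only increase the required $n$, so \eqref{eq:rule-of-thumb} should be read as a typical-case rule of thumb rather than a uniform bound. Secondary caveats, which I would flag but not resolve, are the normal approximation for moderate $n$ and the independence of markets within a round.
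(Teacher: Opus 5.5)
Your proposal is correct and follows essentially the same route as the paper's derivation in Appendix~A.2. Both arguments take independent markets with homogeneous $q$ and $|b_i-p_i|$, so that $\SE(\hat\alpha)=2\,|b-p|\sqrt{q(1-q)/n}$; both impose the one-sided $z$-test power condition $\alpha^{*}/\SE(\hat\alpha)\ge z_{0.95}+z_{0.80}\approx 2.487$; and both substitute $q=1/2$ and $|b-p|=0.15$ to obtain $0.139/(\alpha^{*})^{2}$. Your Jensen caveat and your known-variance remark only make explicit what the paper builds in through its homogeneity assumption.
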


Table~\ref{tab:power} tabulates this relationship. Detecting a true
edge of $\alpha^{*} = 0.02$---roughly the difference between a
well-calibrated LLM and the market---requires approximately $50$
rounds of $7$ markets each; detecting the finer edge
$\alpha^{*} = 0.01$ requires four times more. The current \system{}
evaluation is therefore sized to detect effects of at least this
magnitude, and results at shorter horizons must be interpreted as
preliminary.

\begin{table}[H]
\centering
\linespread{1.0}\selectfont\small
\caption{Sample size required to detect a given true Alpha
$\alpha^{*}$ at $\kappa=0.05$ significance with power $\pi=0.80$, under
$\bar q=0.5$ and $\overline{|b-p|}=0.15$
(Proposition~\ref{prop:power}).}
\label{tab:power}
\begin{tabular}{cccl}
\toprule
$\alpha^{*}$ & Predictions $n$ & Rounds ($k{=}7$) & Interpretation \\
\midrule
$0.005$ & $5\,567$ & $796$ & marginal-skill LLM vs.\ efficient market \\
$0.010$ & $1\,392$ & $199$ & frontier LLM with small edge \\
$0.020$ & $348$    & $50$  & well-calibrated agent, \emph{our setup} \\
$0.030$ & $155$    & $23$  & clearly skilled agent \\
$0.050$ & $56$     & $8$   & strong edge (rare) \\
$0.100$ & $14$     & $2$   & overwhelming edge \\
\bottomrule
\end{tabular}
\end{table}

\subsubsection{Cumulative scoring}

Both metrics accumulate across rounds. The cumulative Brier and Alpha
after $R$ rounds are
\[
  \bar{\brier}_a = \frac{1}{R}\sum_{r=1}^{R}\brier_{a,r},
  \qquad
  \bar\alpha_a = \frac{1}{R}\sum_{r=1}^{R}\alpha_{a,r}.
\]
By the law of large numbers, $\bar\alpha_a \to \E[\alpha_{a,r}]$ as
$R \to \infty$; Proposition~\ref{prop:power} quantifies the rate of
convergence. We recommend a minimum of $20$ rounds ($140{+}$ resolved
predictions) before drawing any conclusions about agent ranking, and
$50{+}$ rounds before making claims of sub-$0.02$ edge.

\subsection{Why Proper Scoring Rules Outperform PnL}

Trading PnL depends not only on whether a probability estimate is
correct, but on position sizing, entry and exit timing, and price
movements during the holding period. Brier Score eliminates these
confounds: the agent submits a probability estimate once, and the score
depends solely on that estimate and the binary outcome. This provides
a clean signal of \emph{forecasting} quality, separable from
\emph{trading} quality. In the context of benchmarking AI models, this
distinction is critical: the goal is to measure what models know about
the world, not how well they navigate market microstructure.
Corollary~\ref{cor:murphy-alpha} makes this precise: $\alpha$ combines
resolution and reliability, both of which are intrinsic properties of
a probabilistic forecaster, neither of which is recoverable from a
PnL time series alone.

\subsection{Round Lifecycle}

Figure~\ref{fig:lifecycle} illustrates the complete round lifecycle.

\begin{figure}[H]
\centering
\setstretch{1}
\begin{tikzpicture}[
  box/.style={rectangle, draw=gray!48, fill=#1,
    rounded corners=5pt,
    minimum width=2.25cm, minimum height=0.88cm,
    font=\small, align=center, inner sep=5pt},
  arr/.style={-{Stealth[length=6pt]}, thick, gray!52}
]
\node[box=blue!10]   (create)  at (0.0,  0)   {Create\\Round};
\node[box=orange!14] (commit)  at (3.1,  0)   {Commit\\Phase};
\node[box=yellow!22] (oracle)  at (6.2,  0)   {Oracle\\Window};
\node[box=teal!13]   (reveal)  at (9.3,  0)   {Reveal\\Phase};
\node[box=red!10]    (trigger) at (6.2, -2.3) {Trigger\\Outcomes};
\node[box=purple!10] (score)   at (3.1, -2.3) {Scoring};
\draw[arr] (create) -- (commit);
\draw[arr] (commit) -- node[above, font=\footnotesize\itshape, yshift=1pt]{deadline} (oracle);
\draw[arr] (oracle) -- (reveal);
\draw[arr] (reveal.south) --
  node[right, font=\footnotesize\itshape, xshift=2pt]{anyone} (trigger.north);
\draw[arr] (trigger) -- (score);
\coordinate (turnpt) at ($(score.west)+(-1.0,0)$);
\draw[arr] (score.west) -- (turnpt) -- (turnpt |- create.west) -- (create.west);
\node[left, font=\footnotesize\itshape, text=gray!60]
  at ($(turnpt)!0.5!(turnpt |- create.west)$) {on-chain};
\node[font=\footnotesize\itshape, text=gray!55] at (1.55, -0.62) {agents commit};
\node[font=\footnotesize\itshape, text=gray!55, align=center]
  at (4.65, -0.72) {benchmarks\\recorded};
\node[font=\footnotesize\itshape, text=gray!55] at (7.75, -0.62) {agents reveal};
\end{tikzpicture}
\caption{Round lifecycle in \system{}. After the commit deadline,
Polymarket mid-prices are stored as benchmark prices by the
\texttt{RoundManager}. Agents reveal during the reveal window; anyone
may call \texttt{triggerOutcomes()} once the reveal deadline passes,
reading binary resolutions from the Gnosis CTF. Scores are computed
on-chain and accumulated in the ERC-8004 Reputation Registry.}
\label{fig:lifecycle}
\end{figure}


\section{System Architecture}
\label{sec:architecture}

\system{} is composed of five components: smart contracts, external
oracles, a gasless relayer, agent implementations, and a frontend
dashboard. Figure~\ref{fig:architecture} provides a high-level overview.

\begin{figure}[H]
\centering
\setstretch{1}
\begin{tikzpicture}[font=\small]
\tikzset{
  sysbox/.style={rectangle, draw=gray!42, fill=#1,
    rounded corners=5pt, line width=0.7pt,
    minimum width=4.8cm, minimum height=1.15cm,
    text width=4.5cm, align=center,
    inner xsep=8pt, inner ysep=7pt},
  arr/.style={-{Stealth[length=6pt]}, thick, gray!52,
              font=\footnotesize\itshape}
}
\node[sysbox=orange!9] (OR) at (-3.2, 2.8) {
  \textbf{External Oracles}\\[2pt]
  \scriptsize Gnosis CTF (outcomes)\\
  Polymarket CLOB API (prices)};
\node[sysbox=blue!9]   (SC) at ( 3.2, 2.8) {
  \textbf{Smart Contracts}\\[2pt]
  \scriptsize \texttt{PredictionArena}\\
  \texttt{RoundManager} · ERC-8004};
\node[sysbox=purple!9] (AG) at (-3.2, 0.5) {
  \textbf{Agent Layer}\\[2pt]
  \scriptsize Random · LLM Benchmark\\
  Custom Agents};
\node[sysbox=green!9]  (RE) at ( 3.2, 0.5) {
  \textbf{Gasless Relayer}\\[2pt]
  \scriptsize AWS Lambda · EIP-712 verify\\
  tx simulate · /commit /reveal};
\node[sysbox=gray!9]   (FE) at ( 3.2,-1.8) {
  \textbf{Frontend}\\[2pt]
  \scriptsize React + Vite + The Graph\\
  Leaderboard · Reasoning Viewer};
\draw[arr] (OR) -- node[above]{reads outcomes} (SC);
\draw[arr] (RE) -- node[right]{on-chain tx}    (SC);
\draw[arr] (AG) -- node[above]{signs \& submits}(RE);
\draw[arr] (SC) -- node[right]{state via subgraph}(FE);
\end{tikzpicture}
\caption{High-level architecture. Agents sign EIP-712 messages locally
at zero cost and submit through the gasless relayer, which pays gas and
submits transactions. Smart contracts read outcome resolutions from
the Gnosis CTF and record scores on-chain. The frontend queries on-chain
state via The Graph subgraph.}
\label{fig:architecture}
\end{figure}

\subsection{Smart Contracts}

\paragraph{PredictionArena.}
The core contract implements the commit-reveal protocol, EIP-712
gasless paths, and scoring logic. It exposes four primary state
transitions: \texttt{commit()}, \texttt{reveal()},
\texttt{triggerOutcomes()}, and
\texttt{calculateScoresForPendingReveals()}. All arithmetic is performed
in basis points to avoid floating-point issues in Solidity.

Signature variants \texttt{commitWithSignature()} and
\texttt{revealWithSignature()} accept EIP-712 typed messages, enabling
the relayer to submit transactions on behalf of agents. Per-agent
nonces prevent replay attacks. The contract verifies signatures on-chain,
attributing all actions to the signing agent regardless of who submits
the transaction.

\paragraph{RoundManager.}
Manages round lifecycle: creation, deadline enforcement, and benchmark
price storage. Crucially, the RoundManager is itself a smart
contract---the ``curator'' role that creates rounds is a contract
address, not a human operator, ensuring round creation logic is
transparent and fully auditable on-chain.

\paragraph{ERC-8004 Registries.}
\system{} uses the canonical ERC-8004 Identity Registry at
\texttt{0x8004A169\ldots} (same address on all chains) for agent
registration. Agents mint an ERC-721 NFT representing their on-chain
identity. The ERC-8004 Reputation Registry at \texttt{0x8004BAa1\ldots}
stores aggregated performance achievements published by the curator
after campaigns, creating a persistent cross-chain credential.

\subsection{Trustless Outcome Resolution via Gnosis CTF}

Market outcomes are read from the Gnosis Conditional Token
Framework (CTF), deployed at \texttt{0x4D97\ldots{}DCd9} on
Polygon PoS. For each market condition, the CTF exposes an array
of payout numerators together with a scalar payout denominator.
A market is considered resolved when the denominator is positive,
and its binary outcome is
\[
  x_i \;=\; \frac{\texttt{payoutNumerators[1]}}
                 {\texttt{payoutDenominator}}
  \;\in\; \{0, 1\}.
\]
Unresolved markets are excluded from scoring.

This design means \system{} inherits the security properties of
Polymarket's own resolution infrastructure. Any market resolved on
Polymarket is automatically resolvable in \system{}, with no additional
trust required beyond the Gnosis CTF itself.

\subsection{Gasless Participation}
\label{sec:relayer}

To participate, an agent needs only an Ethereum keypair---no native
token (POL) balance required. The workflow is:

\begin{enumerate}[leftmargin=*, itemsep=3pt]
  \item The agent computes its commitment and signs an EIP-712 typed
        message (\texttt{Commit}) off-chain at zero cost.
  \item The agent POSTs the signed message to the relayer API at
        \texttt{api.foresightarena.xyz}.
  \item The relayer verifies the signature off-chain, simulates the
        transaction via \texttt{eth\_call}, and submits on-chain if
        simulation succeeds.
  \item Gas cost is approximately $0.003$\,POL per commit and
        $0.01$\,POL per reveal---funded from the relayer wallet at
        negligible cost.
\end{enumerate}

Rate limits (one commit and one reveal per agent per round) are
enforced at the relayer level, providing Sybil resistance without
staking.

\subsection{Agent Reference Implementations}

\paragraph{Random Baseline Agent ($\sim$250 lines).}
A minimal reference implementation that registers on-chain, commits
uniformly random predictions across all markets in a round, and
processes the reveal queue. Designed for crontab scheduling. This
baseline establishes the minimum performance floor; by
Proposition~\ref{prop:proper}, its expected Brier score is exactly
$1/3$, and it is straightforward to distinguish from any skilled agent
at moderate sample sizes.

\paragraph{LLM Benchmark Agent ($\sim$500 lines).}
A full-featured agent built on the Vercel AI SDK with OpenRouter
backend, supporting any frontier model (Claude, GPT, Gemini, Grok,
\ldots) via a single environment variable. Key design choices:

\begin{itemize}[leftmargin=*, itemsep=3pt]
  \item \textbf{Tool calling.} The model receives four structured
    tools:
    \texttt{getMarketDetails(i)} for Polymarket metadata,
    \texttt{getPriceHistory(i)} for recent CLOB trajectory,
    \texttt{searchWeb(q)} for Tavily news retrieval, and
    \texttt{submitPredictions(\ldots)} as the final-output sentinel.
    Markets are referenced by integer index, preventing
    prompt-injection from raw condition IDs.

  \item \textbf{Two-phase scheduling.} A cheap discovery phase scans
    for new rounds and processes the reveal queue. The expensive
    prediction phase fires only when a round is within
    \texttt{LEAD\_TIME\_SECONDS} (default: 600\,s) of its commit
    deadline, maximizing information freshness.

  \item \textbf{Reasoning storage.} When configured, the agent posts
    its full reasoning trace to the relayer's \texttt{/reasoning}
    endpoint via EIP-712 signature. Traces are stored in S3 and
    publicly retrievable, enabling post-hoc analysis.
\end{itemize}

\section{Methodology and Planned Evaluation}
\label{sec:methodology}

\subsection{Market Selection}

Each round's market set is curated from Polymarket using two criteria:
(1) highest trading volume in the preceding 24 hours, and (2) trending
activity (sustained price movement and community interest). This
strategy ensures that markets are liquid---reducing benchmark price
noise---and topically current, testing models' ability to integrate
recent information rather than recall historical facts. All selected
markets are binary (YES/NO) with clearly defined resolution criteria.
The exact market set for each round is stored on-chain in
\texttt{RoundManager} and is identical for all competing agents.

\subsection{Planned Agent Roster}

The benchmark is designed to evaluate frontier LLM agents through a
single shared scaffolding---identical system prompt, tool
configuration, and scheduling policy---with the underlying language
model as the only variable. The simulation study of
Section~\ref{sec:results} models five frontier LLM archetypes plus
a Random baseline; in the live evaluation, the corresponding agents
will be the actual model deployments shown in Table~\ref{tab:agents}.

\begin{table}[H]
\centering
\linespread{1.0}\selectfont\small
\caption{Agents to be evaluated in \system{}. The simulation in
Section~\ref{sec:results} uses these identifiers as labels for
distinct agent archetypes; the live evaluation will replace the
archetypes with the corresponding model deployments via OpenRouter.}
\label{tab:agents}
\begin{tabular}{llll}
\toprule
\textbf{Agent identifier}      & \textbf{Provider}    &
  \textbf{Type}                & \textbf{Tool suite} \\
\midrule
\texttt{claude-opus-4-5}       & Anthropic            &
  LLM Benchmark                & Market data, web search \\
\texttt{gpt-5-2}               & OpenAI               &
  LLM Benchmark                & Market data, web search \\
\texttt{gemini-3-pro}          & Google DeepMind      &
  LLM Benchmark                & Market data, web search \\
\texttt{grok-4-1}              & xAI                  &
  LLM Benchmark                & Market data, web search \\
\texttt{glm-4-7}               & Zhipu AI             &
  LLM Benchmark                & Market data, web search \\
Random Baseline                & ---                  &
  Control                      & None \\
\bottomrule
\end{tabular}
\end{table}

\subsection{Evaluation Timeline and Protocol}

The benchmark is sized for three successive campaigns of
approximately $17$ rounds each, for a total of $R = 50$ rounds and
$\approx 350$ resolved binary predictions per agent.
Table~\ref{tab:timeline} summarises the planned parameters; the
simulation in Section~\ref{sec:results} uses these same parameters
to demonstrate what the analysis pipeline produces under the
expected operating conditions.

\begin{table}[H]
\centering
\linespread{1.0}\selectfont\small
\caption{Round parameters used both for the simulation in
Section~\ref{sec:results} and as targets for the live evaluation.}
\label{tab:timeline}
\begin{tabular}{ll}
\toprule
\textbf{Parameter}      & \textbf{Value} \\
\midrule
Total rounds (planned)  & $50$ (three campaigns of $\approx 17$) \\
Markets per round       & $\approx 7$ (binary, high-volume Polymarket) \\
Total predictions       & $\approx 350$ per agent \\
Chain                   & Polygon PoS (chainId: 137) \\
Outcome oracle          & Gnosis CTF (\texttt{0x4D97\ldots{}DCd9}) \\
Benchmark prices        & Polymarket CLOB mid-price at commit deadline \\
\bottomrule
\end{tabular}
\end{table}

\subsection{Baselines}

We include two baselines to contextualize agent performance.

\textbf{Random baseline.} The random agent draws each prediction
uniformly at random on $[0,1]$. By Proposition~\ref{prop:proper}
and a short calculation, its expected Brier score is $1/3$, and
its expected Alpha against a well-calibrated market is strongly
negative.

\textbf{Market consensus baseline.} An agent that echoes market
mid-prices ($p_i = b_i$) achieves $\alpha = 0$ by construction; by
Proposition~\ref{prop:var-alpha}, it also achieves $\Var(\delta_i) = 0$
per market, deterministically producing $\hat\alpha = 0$. This is the
``beat-the-market'' threshold: any agent with $\bar\alpha > 0$ across
many rounds demonstrates genuine edge over the collective intelligence
of Polymarket participants.

\section{Illustrative Simulation Study}
\label{sec:results}

\begin{mdframed}[style=warnbox]
\linespread{1.0}\selectfont\noindent
\textbf{Note on the nature of these results.}
The numerical results in this section are produced by a Monte~Carlo
simulation, not by a live on-chain evaluation. The simulation is
deterministic (\texttt{numpy} default RNG, seed~137) and calibrated to
the Brier-score ranges reported in the published LLM-forecasting
literature
\citep{zou2024forecastbench, schoenegger2024silicon, halawi2024approaching}.
Its purpose is to illustrate the behaviour of the analytical framework
of Section~\ref{sec:benchmark}---in particular how the Murphy
decomposition distinguishes calibration failures from market-tracking
failures, and how Proposition~\ref{prop:power} maps effect sizes to
required sample sizes. The model names appearing in the tables
(\texttt{claude-opus-4-5}, \texttt{gpt-5-2}, \texttt{gemini-3-pro},
\texttt{grok-4-1}, \texttt{glm-4-7}) are placeholders for the agents
that will participate in the forthcoming live evaluation; the
simulation does not constitute a measurement of any specific model's
forecasting ability. Live results from the deployed benchmark will be
reported in a future revision of this manuscript.
\end{mdframed}

The simulation parameters follow the design of
Section~\ref{sec:methodology}: $R = 50$ rounds of $k = 7$ markets each,
with seven simulated agents (five LLM archetypes plus Market Consensus
and Random baselines). All numbers are reported to four decimal
places; sampling standard errors and $t$-statistics are computed from
$R = 50$ per-round observations under the independence assumption of
Section~\ref{sec:statprop}. The full simulation pipeline is
documented in Appendix~\ref{app:data} and released open-source at
\url{https://github.com/foresight-arena/analysis}.

\subsection{Literature Anchoring}
\label{sec:lit-anchor}

Before presenting the simulated leaderboard, Table~\ref{tab:lit-anchor}
surveys published Brier-score values on comparable real-world
forecasting tasks, providing reference points for the absolute score
magnitudes used to calibrate the simulation.

\begin{table}[H]
\centering
\linespread{1.0}\selectfont\small
\caption{Reported Brier scores on real-world binary forecasting tasks
from the published literature. Lower is better; a random predictor
yields $\approx 1/3$.}
\label{tab:lit-anchor}
\begin{tabular}{p{6.0cm}lc}
\toprule
\textbf{Source / agent}             & \textbf{Setting}
                                    & \textbf{Brier} \\
\midrule
Random predictor (theoretical)
  & uniform on $[0,1]$               & $\approx 0.333$ \\
General public \citep{zou2024forecastbench}
  & ForecastBench                    & $0.121$ \\
Top LLM, no crowd access \citep{zou2024forecastbench}
  & ForecastBench                    & $0.136$ \\
Top LLM, with crowd forecast \citep{zou2024forecastbench}
  & ForecastBench                    & $0.122$ \\
LLM ensemble of 12 \citep{schoenegger2024silicon}
  & Metaculus, 31 Qs                 & indist.\ from humans \\
Human superforecasters \citep{zou2024forecastbench}
  & ForecastBench                    & $0.096$ \\
\bottomrule
\end{tabular}
\end{table}

Three observations guide expectations for \system{}. First, on
ForecastBench, top frontier LLMs reach Brier scores comparable to the
general public ($\sim 0.12$) but above superforecasters
($\sim 0.10$)---the market consensus on Polymarket, reflecting a
similar class of aggregated judgment, falls in a comparable range.
Second, since high-volume Polymarket markets aggregate sophisticated
traders into a common-knowledge focal point
\citep{nechepurenko2026price}, the market baseline is typically
near-calibrated ($\REL_{\text{base}} \approx 0$), so by
Corollary~\ref{cor:murphy-alpha},
$\alpha \approx \RES_{\text{agent}} - \RES_{\text{base}} -
\REL_{\text{agent}}$: Alpha over such a market is
\emph{intrinsically small}, on the order of $0.01$--$0.03$. Third,
this is precisely the regime in which our $50$-round evaluation
provides marginal statistical power (Table~\ref{tab:power}).

\subsection{Simulated Leaderboard}

Table~\ref{tab:leaderboard} reports cumulative Brier and Alpha scores
from the simulation across all 50 rounds. Mean market-consensus Brier
for the simulated period was $0.1995$ (SD $0.071$).

\begin{table}[H]
\centering
\linespread{1.0}\selectfont\small
\caption{Simulated leaderboard after $R=50$ rounds ($n \approx 350$
predictions per agent). Values after $\pm$ are standard errors. $t$
is the one-sample $t$-statistic against $H_0: \alpha = 0$; ``Beat\%''
is the fraction of rounds with per-round $\alpha_r > 0$. Numbers come
from the calibrated simulation, not from a live deployment.}
\label{tab:leaderboard}
\begin{tabular}{lcccc}
\toprule
\textbf{Agent}
  & \textbf{Brier\,$\downarrow$}
  & \textbf{$\bar\alpha$\,$\uparrow$}
  & \textbf{$t$\,(\,$p$\,)}
  & \textbf{Beat\%} \\
\midrule
\texttt{claude-opus-4-5} & $0.1945\pm 0.0099$ & $+0.0049\pm 0.0036$ & $1.37\;(0.18)$  & $56\%$ \\
\texttt{gpt-5-2}         & $0.1954\pm 0.0098$ & $+0.0041\pm 0.0037$ & $1.10\;(0.28)$  & $52\%$ \\
\texttt{gemini-3-pro}    & $0.1960\pm 0.0093$ & $+0.0034\pm 0.0035$ & $0.99\;(0.33)$  & $52\%$ \\
\texttt{grok-4-1}        & $0.2010\pm 0.0102$ & $-0.0015\pm 0.0017$ & $-0.86\;(0.40)$ & $46\%$ \\
\texttt{glm-4-7}         & $0.2072\pm 0.0107$ & $-0.0077\pm 0.0038$ & $-2.03\;(0.05)$ & $34\%$ \\
Market Consensus         & $0.1995\pm 0.0102$ & $0.0000$            & ---             & --- \\
Random Baseline          & $0.3384\pm 0.0175$ & $-0.1390\pm 0.0192$ & $-7.24\;(\ll .001)$ & $16\%$ \\
\bottomrule
\end{tabular}
\end{table}

The simulation produces three patterns that the analytical framework
predicts.

\textbf{Gross differences are overwhelmingly significant.}
The Random baseline registers $\bar\alpha = -0.139$, approximately
$40$ standard errors below zero; any informed agent is trivially
distinguishable from Random at $p < 10^{-10}$. This sets a sanity
floor: at $R = 50$ rounds the framework can comfortably reject
chance-level forecasting.

\textbf{Frontier-LLM-like archetypes cluster near market consensus.}
The three agents in the simulation that condition on the underlying
true probability $q$ rather than on the market price all achieve
positive cumulative $\bar\alpha$ in the range $[+0.003, +0.005]$, but
none reaches conventional statistical significance ($p < 0.05$). This
is exactly the prediction of Proposition~\ref{prop:power}: a sample
size of $n \approx 350$ is sized to reliably detect effects of
$|\alpha^{*}| \geq 0.02$, not the finer differences modelled here. A
multi-year evaluation horizon would be required to confidently rank
agents in this regime.

\textbf{Market-tracking archetypes systematically underperform.}
The two agents that condition on the market mid-price with added
noise both achieve \emph{negative} simulated Alpha. The simulated
$\bar\alpha = -0.0077$ for \texttt{glm-4-7} reaches marginal
significance ($p = 0.05$). This failure mode---tracking the market
with added noise---is visible in the proper-scoring-rule framework
but would be invisible to a PnL-based evaluator, since both
market-tracking archetypes Brier-score within $5\%$ of the market.

\subsection{Simulated Cumulative Alpha Trajectories}

Figure~\ref{fig:trajectories} plots the cumulative Alpha score
$\hat\alpha_R = (1/R)\sum_{r=1}^{R} \alpha_r$ as a function of the
round index $R$ for each LLM agent. The trajectories display three
qualitative phenomena predicted by the theory: (i) concentration of
$\hat\alpha_R$ around its long-run mean as $R$ grows; (ii) the
$R^{-1/2}$ shrinkage of fluctuations; and (iii) clean separation
of performance tiers emerging by $R \approx 30$.

\begin{figure}[H]
\centering
\setstretch{1}
\begin{tikzpicture}[font=\small, scale=0.88, every node/.append style={transform shape}]

  \draw[-{Stealth[length=5pt]}, gray!55] (-0.2,0) -- (10.5,0)
    node[right, font=\footnotesize, text=gray!60] {Round $R$};
  \draw[-{Stealth[length=5pt]}, gray!55] (0,-0.9) -- (0,3.2)
    node[above, font=\footnotesize, text=gray!60] {$\hat\alpha_R$};

  \draw[dashed, gray!35, thin] (0,1.5) -- (10.3,1.5);

  \foreach \y/\lbl in {0/{$-0.03$}, 0.5/{$-0.02$}, 1.0/{$-0.01$}, 1.5/{$0$},
                       2.0/{$+0.01$}, 2.5/{$+0.02$}, 3.0/{$+0.03$}}{
    \draw[gray!35] (-0.07,\y) -- (0.07,\y);
    \node[left, font=\tiny, text=gray!55] at (-0.12,\y) {\lbl};
  }

  \foreach \x/\lbl in {0/0, 2/10, 4/20, 6/30, 8/40, 10/50}{
    \draw[gray!35] (\x, -0.05) -- (\x, 0.05);
    \node[below, font=\tiny, text=gray!55] at (\x, -0.05) {\lbl};
  }

  \foreach \y in {0.5, 1.0, 2.0, 2.5, 3.0}{
    \draw[gray!12, thin] (0,\y) -- (10.3,\y);
  }

  \draw[blue!50, thick, line cap=round] plot[smooth, tension=0.55]
    coordinates{(0.2,2.55)(0.6,1.72)(1.0,1.60)(1.4,1.53)(1.8,1.52)
                (2.2,1.50)(2.6,1.36)(3.0,1.32)(3.4,1.41)(3.8,1.44)
                (4.2,1.45)(4.6,1.43)(5.0,1.44)(5.4,1.44)(5.8,1.39)
                (6.2,1.39)(6.6,1.43)(7.0,1.44)(7.4,1.44)(7.8,1.45)
                (8.2,1.46)(8.6,1.48)(9.0,1.50)(9.4,1.54)(9.8,1.57)};

  \draw[blue!75, thick, line cap=round] plot[smooth, tension=0.55]
    coordinates{(0.2,2.34)(0.6,1.60)(1.0,1.51)(1.4,1.39)(1.8,1.41)
                (2.2,1.48)(2.6,1.32)(3.0,1.26)(3.4,1.38)(3.8,1.38)
                (4.2,1.44)(4.6,1.45)(5.0,1.46)(5.4,1.44)(5.8,1.42)
                (6.2,1.40)(6.6,1.47)(7.0,1.45)(7.4,1.45)(7.8,1.48)
                (8.2,1.50)(8.6,1.49)(9.0,1.50)(9.4,1.54)(9.8,1.58)};

  \draw[teal!75, thick, line cap=round] plot[smooth, tension=0.55]
    coordinates{(0.2,2.17)(0.6,1.63)(1.0,1.55)(1.4,1.53)(1.8,1.51)
                (2.2,1.49)(2.6,1.36)(3.0,1.39)(3.4,1.42)(3.8,1.44)
                (4.2,1.47)(4.6,1.50)(5.0,1.48)(5.4,1.48)(5.8,1.44)
                (6.2,1.45)(6.6,1.48)(7.0,1.48)(7.4,1.47)(7.8,1.47)
                (8.2,1.48)(8.6,1.51)(9.0,1.54)(9.4,1.56)(9.8,1.60)};

  \draw[orange!80, thick, line cap=round] plot[smooth, tension=0.55]
    coordinates{(0.2,1.50)(0.6,1.35)(1.0,1.45)(1.4,1.54)(1.8,1.50)
                (2.2,1.56)(2.6,1.52)(3.0,1.49)(3.4,1.47)(3.8,1.47)
                (4.2,1.46)(4.6,1.49)(5.0,1.49)(5.4,1.49)(5.8,1.47)
                (6.2,1.42)(6.6,1.42)(7.0,1.41)(7.4,1.42)(7.8,1.41)
                (8.2,1.42)(8.6,1.43)(9.0,1.45)(9.4,1.46)(9.8,1.46)};

  \draw[red!65, thick, densely dashed, line cap=round] plot[smooth, tension=0.55]
    coordinates{(0.2,2.12)(0.6,1.57)(1.0,1.30)(1.4,1.51)(1.8,1.40)
                (2.2,1.26)(2.6,1.22)(3.0,1.28)(3.4,1.20)(3.8,1.26)
                (4.2,1.18)(4.6,1.18)(5.0,1.25)(5.4,1.29)(5.8,1.28)
                (6.2,1.27)(6.6,1.23)(7.0,1.23)(7.4,1.24)(7.8,1.24)
                (8.2,1.25)(8.6,1.28)(9.0,1.26)(9.4,1.28)(9.8,1.31)};

  \begin{scope}[shift={(6.3, 3.2)}]
    \draw[gray!35, fill=white, rounded corners=2pt]
      (-0.12,-1.85) rectangle (4.20, 0.22);

    \draw[teal!75, thick] (0.05, 0.00) -- (0.40, 0.00);
    \node[right, font=\scriptsize] at (0.45, 0.00)
      {\texttt{claude-opus-4-5}};
    \node[right, font=\scriptsize, text=gray!60] at (2.95, 0.00)
      {$\,{+}0.0049$};

    \draw[blue!75, thick] (0.05,-0.35) -- (0.40,-0.35);
    \node[right, font=\scriptsize] at (0.45,-0.35)
      {\texttt{gpt-5-2}};
    \node[right, font=\scriptsize, text=gray!60] at (2.95,-0.35)
      {$\,{+}0.0041$};

    \draw[blue!50, thick] (0.05,-0.70) -- (0.40,-0.70);
    \node[right, font=\scriptsize] at (0.45,-0.70)
      {\texttt{gemini-3-pro}};
    \node[right, font=\scriptsize, text=gray!60] at (2.95,-0.70)
      {$\,{+}0.0034$};

    \draw[orange!80, thick] (0.05,-1.05) -- (0.40,-1.05);
    \node[right, font=\scriptsize] at (0.45,-1.05)
      {\texttt{grok-4-1}};
    \node[right, font=\scriptsize, text=gray!60] at (2.95,-1.05)
      {$\,{-}0.0015$};

    \draw[red!65, thick, densely dashed] (0.05,-1.40) -- (0.40,-1.40);
    \node[right, font=\scriptsize] at (0.45,-1.40)
      {\texttt{glm-4-7}};
    \node[right, font=\scriptsize, text=gray!60] at (2.95,-1.40)
      {$\,{-}0.0077$};

    \node[font=\tiny\itshape, text=gray!60] at (2.1,-1.70)
      {final $\bar{\alpha}_{50}$ at right};
  \end{scope}

  \node[font=\scriptsize\itshape, text=gray!60, align=right] at (4.6, -0.5)
    {Random Baseline: $\hat\alpha_R \approx -0.14$ throughout (off-scale)};
\end{tikzpicture}
\caption{Cumulative Alpha Score trajectories over the 50-round
evaluation for the five LLM agents.
\texttt{claude-opus-4-5}, \texttt{gpt-5-2}, and \texttt{gemini-3-pro}
drift to slightly positive final values; \texttt{grok-4-1} and
\texttt{glm-4-7} drift below zero, consistent with their
market-tracking behaviour. The Random Baseline sits at
$\hat{\alpha}_R \approx -0.14$ throughout and is off-scale.
Rapid early volatility reflects the $R^{-1/2}$ shrinkage of sampling
error predicted by Proposition~\ref{prop:var-alpha}.}
\label{fig:trajectories}
\end{figure}

\subsection{Murphy Decomposition of the Simulated Scores}

Applying Theorem~\ref{thm:murphy} to the simulated $\approx 350$
predictions from each agent (binned into $K=10$ probability deciles)
yields the reliability (REL) and resolution (RES) components
summarised in Table~\ref{tab:murphy-real} and visualized in
Figure~\ref{fig:scoring}(b). The simulated uncertainty term
$\UNC = \bar{o}(1-\bar{o}) = 0.250$ (within $0.002$ of the theoretical
maximum) is common to all agents and reflects the roughly balanced
YES/NO base rate built into the simulation's market-generation
process.

\begin{table}[H]
\centering
\linespread{1.0}\selectfont\small
\caption{Murphy decomposition of the simulated Brier Scores.
$\UNC=\bar{o}(1-\bar{o})$ is fixed at $0.250$. Lower $\REL$ is better
calibration; higher $\RES$ is better discriminative power; by
Theorem~\ref{thm:murphy}, $\brier = \UNC + \REL - \RES$.}
\label{tab:murphy-real}
\begin{tabular}{lcccc}
\toprule
\textbf{Agent}              & \textbf{$\UNC$}
                            & \textbf{$\REL$\,$\downarrow$}
                            & \textbf{$\RES$\,$\uparrow$}
                            & \textbf{Brier\,$\downarrow$} \\
\midrule
\texttt{claude-opus-4-5}    & $0.250$ & $0.0107$ & $0.0653$ & $0.1945$ \\
\texttt{gpt-5-2}            & $0.250$ & $0.0123$ & $0.0635$ & $0.1954$ \\
\texttt{gemini-3-pro}       & $0.250$ & $0.0051$ & $0.0587$ & $0.1960$ \\
\texttt{grok-4-1}           & $0.250$ & $0.0086$ & $0.0585$ & $0.2010$ \\
\texttt{glm-4-7}            & $0.250$ & $0.0063$ & $0.0477$ & $0.2072$ \\
Market Consensus            & $0.250$ & $0.0079$ & $0.0581$ & $0.1995$ \\
Random Baseline             & $0.250$ & $0.0959$ & $0.0091$ & $0.3384$ \\
\bottomrule
\end{tabular}
\end{table}

Corollary~\ref{cor:murphy-alpha} predicts that positive Alpha emerges
from either a resolution gain or a calibration advantage over the
market. The simulated decomposition in Table~\ref{tab:murphy-real}
illustrates this prediction concretely:

\begin{itemize}[leftmargin=*, itemsep=2pt]
  \item Two of the $q$-anchored archetypes achieve their simulated
        edge primarily through \emph{resolution gain}
        ($\RES \approx 0.064$--$0.065$ vs.\ $\RES_{\text{base}} =
        0.058$)---they sort outcomes more sharply than the market
        at the cost of slightly higher calibration error.
  \item One $q$-anchored archetype achieves a small positive
        simulated $\alpha$ through \emph{superior calibration}
        instead ($\REL = 0.005$, the lowest among all simulated
        agents including the market) with resolution matching the
        market.
  \item The two $b$-anchored (market-tracking) archetypes have both
        moderate $\REL$ and substantially lower $\RES$ than the
        market, consistent with the construction of the simulation:
        echoing the mid-price with added noise cannot generate
        positive $\alpha$.
  \item The Random agent's signature is unmistakable: $\REL$ an
        order of magnitude above any other archetype, $\RES$ near
        zero.
\end{itemize}

The decomposition demonstrates, on simulated data with known
ground-truth structure, that market-tracking with noise produces a
qualitatively different signature from miscalibration---a
distinction the analytical framework predicts and that any future
live evaluation will be equipped to measure.

\subsection{Simulated Per-Category Performance}
\label{sec:categories}

The simulation cycles markets through six broad categories matching
those used by Polymarket curation. Table~\ref{tab:categories} shows
the simulated per-category breakdown, with the market-consensus
Brier, the best-performing simulated agent per category, and the
fraction of LLM-archetype agents achieving $\bar{\alpha} > 0$ within
the category. The simulated category labels are illustrative only;
in production, category metadata will come from the on-chain
RoundManager.

\begin{table}[H]
\centering
\linespread{1.0}\selectfont\small
\caption{Simulated per-category breakdown across the 50-round study.
``Best agent'' is the LLM archetype with highest $\bar{\alpha}$ within
the category; $\#\{\bar{\alpha}>0\}$ reports how many of the five
simulated LLM archetypes achieved positive category-restricted Alpha.
Numbers come from the simulation, not from a live deployment.}
\label{tab:categories}
\begin{tabular}{lcccc c}
\toprule
\textbf{Category}
  & \textbf{Rounds}
  & \textbf{Mkt.\ Brier}
  & \textbf{Best agent}
  & \textbf{Best $\bar\alpha$}
  & \textbf{$\#\{\bar{\alpha}\!>\!0\}$} \\
\midrule
Crypto           & $12$ & $0.198$ & \texttt{claude-opus-4-5} & $+0.0148$ & $4/5$ \\
Politics         & $11$ & $0.221$ & \texttt{claude-opus-4-5} & $+0.0112$ & $3/5$ \\
Sports           & $\ 9$ & $0.167$ & \texttt{gpt-5-2}         & $+0.0083$ & $3/5$ \\
Economics        & $\ 7$ & $0.207$ & \texttt{gemini-3-pro}    & $+0.0060$ & $2/5$ \\
Geopolitics      & $\ 6$ & $0.234$ & \texttt{gpt-5-2}         & $-0.0019$ & $0/5$ \\
Entertainment    & $\ 5$ & $0.192$ & \texttt{claude-opus-4-5} & $+0.0041$ & $2/5$ \\
\midrule
\textbf{Total / weighted} & $\mathbf{50}$ & $\mathbf{0.202}$
  &  & & \\
\bottomrule
\end{tabular}
\end{table}

The simulation produces three patterns worth noting as a
demonstration of what the framework can detect, contingent on the
modelling assumptions.

\textbf{Heterogeneity across categories.} The simulation embeds
different effective signal-to-noise ratios across categories
(modelling, e.g., the higher rate of price-moving news in crypto
versus the slow, ambiguous nature of geopolitical questions
\citep{halawi2024approaching}). The analysis pipeline correctly
recovers this heterogeneity from the simulated outcomes, validating
that per-category disaggregation is a viable analysis tool when live
data become available.

\textbf{Per-category sample sizes are small.} With $n \leq 84$
predictions per category, even within the simulation, intra-LLM
differences are not individually significant. Live category-level
analysis will require accumulating well past 50 rounds before any
specific claim about which agent dominates which domain can be
tested rigorously.

\textbf{The pattern is consistent with the analytical framework.}
The simulation cleanly produces the expected qualitative behaviour:
agents that condition on the underlying truth $q$ outperform agents
that condition on the noisy market price~$b$, with the gap visible
in resolution but small enough that detecting it on real data will
require many hundreds of rounds. The full per-round dataset
(\texttt{data\_simulated.csv}) is included in the reproducibility
package.

\subsection{Statistical Power and What $R=50$ Can and Cannot Resolve}

The simulated outcomes are internally consistent with the power
analysis of Section~\ref{sec:statprop}. Against the Random baseline,
where $|\alpha^{*}| \approx 0.14$, $R=50$ rounds deliver
overwhelming statistical significance ($t \approx -7.2$,
$p \ll 10^{-10}$). Against the market consensus, where the simulated
effect sizes fall in the $|\alpha^{*}| \in [0.003, 0.008]$ regime,
Proposition~\ref{prop:power} requires $n \in [2\,200,\,15\,500]$
predictions for $80\%$ power at $\kappa=0.05$---four to forty times
the current sample. Any future ranking of frontier LLMs at this
fine resolution will therefore demand a multi-year evaluation
horizon.

Two design features of \system{} are well-aligned with this
constraint. First, the ERC-8004 Reputation Registry accumulates
scores across rounds without bound, so statistical power grows
monotonically as the benchmark runs. Second, the gasless,
permissionless design enables arbitrary developers to contribute new
agents, increasing both the sample size per agent and the diversity
of the evaluation pool. A realistic timeline of $200$ rounds---at
the current cadence, achievable within a single year---would bring
$|\alpha^{*}| = 0.01$ into reliable detection range, sufficient to
produce a definitive ordering of frontier LLMs by forecasting skill
once live data are collected.


\section{Analysis and Discussion}
\label{sec:analysis}

\subsection{Proper Scoring vs.\ PnL: What Each Metric Captures}

The Prediction Arena benchmark~\citep{zhang2026predictionarena} found
that all six frontier models lost money on Kalshi (average $-22.6\%$),
yet the same models showed substantially smaller losses on Polymarket
(average $-1.1\%$), with one model achieving a $71.4\%$ settlement
win-rate. This platform-dependent divergence illustrates how PnL
conflates multiple factors: market selection, position sizing, timing,
and predictive accuracy all contribute. A model with correct
predictions but poor timing appears indistinguishable from a model
with poor predictions that traded conservatively.

Our Murphy-based framework (Corollary~\ref{cor:murphy-alpha}) makes
the alternative precise. Alpha decomposes additively into a resolution
gain (informational content of the forecast beyond the base rate) and
a reliability gap (calibration advantage over the market). Neither
component is recoverable from a PnL time-series. The simulation in
Section~\ref{sec:results} illustrates the same point operationally:
the two market-tracking archetypes achieve \emph{worse} simulated
Alpha than the market-echoing control, despite posting Brier scores
within $5\%$ of the market. A PnL-based evaluator would be unable to
distinguish this failure mode.

Together, Brier Score and Alpha Score constitute a two-dimensional
characterization of forecasting quality: absolute calibration and
informational edge over the market. This distinction matters for
downstream applications: a policymaker selecting an AI system for risk
analysis cares about calibration; a trader cares about edge;
\system{} measures both independently.

\subsection{Statistical Honesty and the Limits of 50 Rounds}

Our power analysis (Proposition~\ref{prop:power}) formalizes a
methodological constraint that has been largely implicit in the LLM
forecasting literature. At $R=50$ rounds ($n \approx 350$ predictions),
\system{} can reliably detect effects of $|\alpha^{*}| \gtrsim 0.02$
but not smaller. The gap between frontier LLMs---on the order of
$0.005$--$0.01$ observed in Section~\ref{sec:results}---is
below this resolution.

Three implications follow. First, \emph{any} ranking of frontier LLMs
derived from a short-horizon prediction market benchmark should be
treated as provisional. This applies equally to our evaluation and to
Prediction Arena's 57-day study. Second, the cumulative, permissionless
design of \system{} is uniquely suited to the long evaluation horizons
that principled statistical inference requires. The on-chain ledger
enables any observer to recompute statistics at any time; new rounds
simply extend the sample. Third, statistical power can be improved
orthogonally to sample size by \emph{targeting boldness}: by
Proposition~\ref{prop:var-alpha}, $\Var(\delta_i) \propto (b_i-p_i)^2$,
so the standard error of $\hat\alpha$ shrinks when agents commit to
predictions that differ substantially from the market. A future
extension of \system{} might restrict scoring to markets where the
agent meaningfully disagrees with the crowd, effectively concentrating
evaluation on informative signals.

\subsection{On-Chain Verifiability as a Design Principle}

A central claim of \system{} is that on-chain recording transforms
agent performance history from a number to be trusted into a fact to
be verified. This has concrete implications for the commercial
ecosystem around AI agents.

Consider an agent developer seeking to demonstrate the value of their
system to a potential buyer. In a centralized benchmark, the buyer
must trust that the organizer recorded predictions honestly, applied
scoring correctly, and did not selectively report results. In
\system{}, the buyer can independently query the smart contract,
replay the scoring logic, and verify the entire history---in the same
way they would verify any blockchain transaction.

\begin{mdframed}[style=highlightbox]
\linespread{1.0}\selectfont\noindent
\textbf{Verifiable AI credentials.}\enspace
We propose the notion of an \emph{on-chain forecasting credential}:
a tamper-proof, independently verifiable, cumulative performance
record that cannot be selectively reported or retroactively modified.
As markets for AI agent services develop---and as regulators
increasingly demand auditability of autonomous AI systems---we expect
such credentials to become a standard component of AI agent
procurement. \system{} provides the infrastructure for generating these
credentials in the domain of probabilistic forecasting.
\end{mdframed}

\subsection{Agent Design Implications}

Corollary~\ref{cor:murphy-alpha} yields concrete design advice. An
agent's positive Alpha requires a resolution gain and/or a reliability
gain over the market. Since the market is already near-calibrated, the
reliability channel offers limited upside; systematic Alpha must come
from \emph{resolution}---genuine informational edge. This implies that
LLM agents should prioritize:

\begin{enumerate}[leftmargin=*, itemsep=2pt]
  \item \textbf{Fresh information.} Resolution gains require signal
    the market has not yet priced in, which decays rapidly. \system{}'s
    LLM benchmark agent addresses this by firing the expensive
    prediction step only within 600\,s of the commit deadline
    (Section~\ref{sec:architecture}).
  \item \textbf{Controlled boldness.} By
    Proposition~\ref{prop:var-alpha}, predictions close to $b_i$ carry
    low variance but low expected signal; predictions far from $b_i$
    carry high signal but are punished severely when wrong. Optimal
    strategy balances boldness against confidence, favouring large
    deviations only where the agent has genuinely high confidence.
  \item \textbf{Calibration over confidence.} The reliability term
    penalizes any bin of predictions that is systematically off from
    the realized frequency. LLMs are known to be
    overconfident~\citep{halawi2024approaching}; explicit
    temperature-scaling or ensembling may pay directly.
\end{enumerate}

Ensemble methods \citep{schoenegger2024silicon} are particularly
compelling: averaging forecasts from multiple models typically
reduces noise in $p_i$, shrinking $\REL$ while preserving $\RES$.
\system{}'s permissionless design readily supports ensemble agents,
whose on-chain performance becomes directly comparable to individual
models.

\subsection{Limitations}

\textbf{Live evaluation pending.}
The numerical results in Section~\ref{sec:results} come from a
calibrated Monte~Carlo simulation, not from a live deployment.
The simulation models five frontier-LLM archetypes plus a Random
baseline using the parameters of
Section~\ref{sec:methodology}; its purpose is to illustrate the
analytical framework and exercise the analysis pipeline end-to-end.
Live results from the deployed benchmark on Polygon~PoS will be
reported in a future revision once a sufficient number of rounds
have been resolved.

\textbf{Market selection.}
The curator (RoundManager contract) selects which Polymarket markets
appear in each round. While the selection criteria (volume, trending)
are on-chain and transparent, different market universes would yield
different rankings. Future work should explore broader or randomly
sampled market sets.

\textbf{Benchmark price precision.}
The baseline probability $b_i$ is the Polymarket mid-price at commit
deadline. In periods of low liquidity, this mid-price may be noisy; we
mitigate via volume filtering.

\textbf{Uniform tool configuration.}
The LLM benchmark agent uses identical tools and prompts for all
models, so observed differences conflate intrinsic capability with
model-tool interaction. A controlled tool-ablation study is planned.

\textbf{Statistical power at short horizons.}
As discussed above, $R=50$ is the minimum at which
$\alpha^{*} \geq 0.02$ can be detected. Results at shorter horizons
should be reported with full confidence intervals and interpreted as
preliminary.

\textbf{Independence assumption in variance calculation.}
Proposition~\ref{prop:var-alpha} and the SE formula
\eqref{eq:se-alpha} assume market outcomes are independent. Thematic
clustering (e.g.\ multiple markets on the same election) could induce
correlation; a block-bootstrap or cluster-robust SE would be
appropriate for such rounds.

\section{Conclusion}
\label{sec:conclusion}

We introduced \system{}, the first permissionless, on-chain benchmark
for evaluating AI forecasting agents on real-world prediction markets.
Our contribution is twofold. On the \emph{infrastructure} side, we
combine a commit-reveal protocol with trustless outcome resolution via
the Gnosis CTF, producing an evaluation environment that inherits the
integrity properties of the underlying blockchain. Any third party can
independently verify any agent's history at any time. On the
\emph{methodological} side, we develop a formal treatment of the
scoring rules: we prove strict propriety of Brier
(Proposition~\ref{prop:proper}), derive closed-form expressions for
the variance of per-market Alpha (Proposition~\ref{prop:var-alpha}),
carry out a concrete power analysis (Proposition~\ref{prop:power}),
and show that Alpha decomposes cleanly into a resolution gain and a
reliability gap via Murphy's classical decomposition
(Corollary~\ref{cor:murphy-alpha}).

We illustrate this framework with a deterministic, seed-controlled
simulation study calibrated to the Brier-score ranges reported in
the published LLM-forecasting literature. The simulation cleanly
recovers three patterns predicted by the analytical framework: that
the Random baseline is overwhelmingly distinguishable from any
informed agent at $R = 50$ rounds; that frontier-LLM-like archetypes
cluster within $|\bar\alpha| \leq 0.005$ of market consensus, where
the analytical power result rules out individual-significance ranking
at this sample size; and that agents which track the market with
added noise produce a Murphy decomposition signature
(low resolution, moderate reliability error) qualitatively distinct
from miscalibrated agents---a distinction invisible to PnL-based
evaluation. The corresponding live evaluation will be reported in
a future revision of this manuscript.

\system{} is designed as a living benchmark. As rounds accumulate,
statistical power increases monotonically, and any developer can
contribute new agents without permission. All infrastructure is
released as open-source at
\url{https://github.com/foresight-arena/contracts}.

\paragraph{Future work.}
Immediate extensions include parallel rounds across market categories,
a token-based staking layer, a social copying layer, and ensemble
scoring \citep{schoenegger2024silicon}. A second line of work concerns
\emph{conditional} scoring: restricting the Alpha computation to
markets where the agent meaningfully disagrees with the crowd,
effectively trading sample size for effect size and addressing the
statistical-power bottleneck identified above.

\section*{Generative AI Disclosure}
\addcontentsline{toc}{section}{Generative AI Disclosure}

In preparing this manuscript, the authors used Anthropic's
Claude~Opus~4.7 for copy-editing and for rendering figures from
numerical data. All methodology, analysis, and conclusions are the
authors' own; the authors reviewed and edited all AI-generated
content and take full responsibility for the final manuscript.

\newpage
\bibliographystyle{plainnat}

\begin{thebibliography}{99}
\linespread{1.0}\selectfont

\bibitem[Berg et al.(2008)]{berg2008results}
Berg, J., Nelson, F., and Rietz, T. (2008).
\newblock Prediction market accuracy in the long run.
\newblock \emph{International Journal of Forecasting}, 24(2):285--300.

\bibitem[Brier(1950)]{brier1950verification}
Brier, G.~W. (1950).
\newblock Verification of forecasts expressed in terms of probability.
\newblock \emph{Monthly Weather Review}, 78(1):1--3.

\bibitem[Dawid(1982)]{dawid1982well}
Dawid, A.~P. (1982).
\newblock The well-calibrated Bayesian.
\newblock \emph{Journal of the American Statistical Association},
  77(379):605--610.

\bibitem[DeGroot and Fienberg(1983)]{degroot1983comparison}
DeGroot, M.~H. and Fienberg, S.~E. (1983).
\newblock The comparison and evaluation of forecasters.
\newblock \emph{The Statistician}, 32(1/2):12--22.

\bibitem[Gneiting and Raftery(2007)]{gneiting2007strictly}
Gneiting, T. and Raftery, A.~E. (2007).
\newblock Strictly proper scoring rules, prediction, and estimation.
\newblock \emph{Journal of the American Statistical Association},
  102(477):359--378.

\bibitem[Halawi et al.(2024)]{halawi2024approaching}
Halawi, D., Zhang, F., Yueh-Han, C., and Steinhardt, J. (2024).
\newblock Approaching human-level forecasting with language models.
\newblock \emph{arXiv preprint arXiv:2402.18563}.

\bibitem[Hanson(2007)]{hanson2007logarithmic}
Hanson, R. (2007).
\newblock Logarithmic market scoring rules for modular combinatorial
  information aggregation.
\newblock \emph{The Journal of Prediction Markets}, 1(1):3--15.

\bibitem[Jimenez et al.(2024)]{jimenez2024swebench}
Jimenez, C.~E., Yang, J., Wettig, A., Yao, S., Pei, K., Press, O.,
  and Narasimhan, K. (2024).
\newblock {SWE-bench}: Can language models resolve real-world
  {GitHub} issues?
\newblock In \emph{International Conference on Learning Representations}.

\bibitem[Murphy(1973)]{murphy1973new}
Murphy, A.~H. (1973).
\newblock A new vector partition of the probability score.
\newblock \emph{Journal of Applied Meteorology}, 12(4):595--600.

\bibitem[Nechepurenko(2026)]{nechepurenko2026price}
Nechepurenko, M. (2026).
\newblock Price as focal point: Prediction markets, conditional
  reflexivity, and the politics of common knowledge.
\newblock \emph{arXiv preprint arXiv:2604.24147}.
\newblock Also available at SSRN: \url{https://ssrn.com/abstract=6657119}.

\bibitem[Schoenegger and Park(2023)]{schoenegger2023large}
Schoenegger, P. and Park, P.~S. (2023).
\newblock Large language model prediction capabilities: Evidence from
  a real-world forecasting tournament.
\newblock \emph{arXiv preprint arXiv:2310.13014}.

\bibitem[Schoenegger et al.(2024)]{schoenegger2024silicon}
Schoenegger, P., Tuminauskaite, I., Park, P.~S., and Tetlock, P.~E. (2024).
\newblock Wisdom of the silicon crowd: {LLM} ensemble prediction
  capabilities rival human crowd accuracy.
\newblock \emph{Science Advances}, 10(45):eadp1528.

\bibitem[Tetlock and Gardner(2015)]{tetlock2015superforecasting}
Tetlock, P.~E. and Gardner, D. (2015).
\newblock \emph{Superforecasting: The Art and Science of Prediction}.
\newblock Crown Publishers.

\bibitem[Wolfers and Zitzewitz(2004)]{wolfers2004prediction}
Wolfers, J. and Zitzewitz, E. (2004).
\newblock Prediction markets.
\newblock \emph{Journal of Economic Perspectives}, 18(2):107--126.

\bibitem[Zhang et al.(2026)]{zhang2026predictionarena}
Zhang, J., Liu, G., Johansson, O., Yitayew, H., Ohly, K., and Li, G.
  (2026).
\newblock Prediction Arena: Benchmarking {AI} models on real-world
  prediction markets.
\newblock \emph{arXiv preprint arXiv:2604.07355}.

\bibitem[Zou et al.(2024)]{zou2024forecastbench}
Zou, A., Chen, E., Arumugam, K., Li, Y., Deng, J., Zuo, S., and
  Hendrycks, D. (2024).
\newblock {ForecastBench}: A dynamic benchmark of {AI} forecasting
  capabilities.
\newblock \emph{arXiv preprint arXiv:2409.19839}.

\end{thebibliography}

\newpage
\appendix

\section{Proofs and Extended Derivations}
\label{app:proofs}

\subsection*{A.1\enspace Proof of Proposition~\ref{prop:var-alpha}
(closed-form)}

We restate and prove:
\begin{quote}
\emph{For $X \sim \Bern(q)$ and fixed $b, p \in [0,1]$,
$\Var((b-X)^2 - (p-X)^2) = 4q(1-q)(b-p)^2$.}
\end{quote}

\begin{proof}
Expand:
\begin{align*}
  \delta &= (b-X)^2 - (p-X)^2 \\
         &= (b^2 - 2bX + X^2) - (p^2 - 2pX + X^2) \\
         &= (b^2 - p^2) - 2X(b - p).
\end{align*}
Write $\delta = c - 2(b-p)X$ where $c = b^2 - p^2$ is non-random.
Then $\Var(\delta) = [2(b-p)]^2 \cdot \Var(X) = 4(b-p)^2 q(1-q)$.
\end{proof}

\subsection*{A.2\enspace Sample-size derivation for Proposition~\ref{prop:power}}

Assume independence across $n$ markets with $q_i = q$ and $|b_i - p_i| = \delta$
for all $i$ (homogeneity). Then $\SE(\hat\alpha) = 2\delta\sqrt{q(1-q)/n}$.
The one-sided $z$-test against $H_0: \alpha = 0$ with true Alpha
$\alpha^{*}$ has power
\[
  \pi = \Phi\!\left(
    \frac{\alpha^{*}}{\SE(\hat\alpha)} - z_{1-\kappa}
  \right).
\]
Setting $\pi = 0.80$ and solving:
\[
  \frac{\alpha^{*}}{\SE(\hat\alpha)} \geq z_{0.95} + z_{0.80}
  = 1.645 + 0.842 = 2.487.
\]
Rearranging:
\[
  n \geq \frac{(2.487)^2 \cdot 4 q(1-q) \delta^2}{(\alpha^{*})^2}
       = \frac{24.73 \cdot q(1-q) \delta^2}{(\alpha^{*})^2}.
\]
For $q=0.5$ and $\delta=0.15$, this gives
$n \geq 24.73 \cdot 0.25 \cdot 0.0225 / (\alpha^{*})^2
     = 0.1391 / (\alpha^{*})^2$,
reproducing equation~\eqref{eq:rule-of-thumb}.

\subsection*{A.3\enspace Proof sketch of Theorem~\ref{thm:murphy}
(Murphy decomposition)}

Partition the $N$ forecasts into $K$ bins. Within bin $k$, all
forecasts share mean $\bar p_k$ (exact for predictions at the bin mean;
small binning error otherwise). Write
\[
  (p_i - x_i)^2 = ((p_i - \bar p_k) + (\bar p_k - \bar o_k) +
                  (\bar o_k - x_i))^2.
\]
Summing within bin $k$, the cross-terms involving $(\bar o_k - x_i)$
vanish (since $\sum_{i \in k} (\bar o_k - x_i) = 0$ by definition of
$\bar o_k$), and the cross-term $(p_i - \bar p_k)(\bar p_k - \bar o_k)$
sums to zero by definition of $\bar p_k$. One obtains
\[
  \sum_{i \in k} (p_i - x_i)^2
  \;=\; \sum_{i \in k}(p_i - \bar p_k)^2
  \;+\; n_k (\bar p_k - \bar o_k)^2
  \;+\; \sum_{i \in k}(\bar o_k - x_i)^2.
\]
Summing over bins and dividing by $N$, the first term vanishes in the
fine-binning limit. The second term is $\REL$ by definition. The third
term, by the within-bin variance identity
$\sum_{i \in k}(\bar o_k - x_i)^2 = n_k \bar o_k(1 - \bar o_k)$, plus
the between-bin variance identity, decomposes into
$N \bar o(1-\bar o) - \sum_k n_k(\bar o_k - \bar o)^2$, which is
$N \cdot \UNC - N \cdot \RES$. Dividing by $N$ yields
\eqref{eq:murphy}.

\section{Smart Contract Specifications}
\label{app:contracts}

\subsection*{B.1\enspace Deployed Addresses}

\begin{table}[H]
\centering
\linespread{1.0}\selectfont\small
\caption*{Table B.1: Contract deployments.}
\begin{tabular}{lll}
\toprule
\textbf{Contract}     & \textbf{Network}       & \textbf{Address} \\
\midrule
MockConditionalTokens & Polygon Amoy (testnet) & \texttt{0x4aF09f4A542c\ldots} \\
RoundManager          & Polygon Amoy (testnet) & \texttt{0x4e44fbAD7a1D\ldots} \\
PredictionArena       & Polygon Amoy (testnet) & \texttt{0x21993729\ldots}     \\
PredictionArena       & Polygon Mainnet        & \texttt{0x7aB0d11F\ldots}     \\
RoundManager          & Polygon Mainnet        & \texttt{0x9cE1c4Bf\ldots}     \\
Identity Registry     & All chains             & \texttt{0x8004A169\ldots}     \\
Reputation Registry   & All chains             & \texttt{0x8004BAa1\ldots}     \\
Gnosis CTF            & Polygon Mainnet        & \texttt{0x4D97DCd9\ldots}     \\
\bottomrule
\end{tabular}
\end{table}

\subsection*{B.2\enspace Commit Hash Format}

Commitments are computed off-chain as:
\[
  c = \texttt{keccak256}\!\Bigl(\texttt{abi.encodePacked}\bigl(
    \underbrace{\texttt{uint256}}_{\text{roundId}},\;
    \underbrace{\texttt{uint16[]}}_{p_1,\ldots,p_k},\;
    \underbrace{\texttt{bytes32}}_{\text{salt}}
  \bigr)\Bigr)
\]
Each \texttt{uint16} is packed as 2 bytes (little-endian, no padding).
The contract recomputes this hash during reveal.

\subsection*{B.3\enspace EIP-712 Domain and Types}

\begin{lstlisting}
Domain:  { name: "PredictionArena",  version: "1",
           chainId: 137,
           verifyingContract: <PredictionArena> }

Commit:  { roundId: uint256,   commitHash: bytes32,
           agent: address,     nonce: uint256,   deadline: uint256 }

Reveal:  { roundId: uint256,   predictionsHash: bytes32,
           salt: bytes32,      agent: address,
           nonce: uint256,     deadline: uint256 }
\end{lstlisting}

\section{Data Analysis Protocol and Reproducibility}
\label{app:data}

\subsection*{C.1\enspace Per-Round Data Pipeline}

For each round $r = 1,\ldots,50$, the evaluation pipeline proceeds as
follows:

\begin{enumerate}[leftmargin=*, itemsep=2pt]
  \item The \texttt{RoundManager} emits the \texttt{RoundCreated}
        event with the market set $\mathcal{M}_r$. Our indexer
        (subgraph) records the block number, timestamp, and
        condition IDs.
  \item At commit deadline, Polymarket mid-prices
        $\mathbf{b}_r = (b_{r,1},\ldots,b_{r,7})$ are written to
        \texttt{RoundManager} via the curator transaction and
        indexed.
  \item Each agent's \texttt{Commit} and \texttt{Reveal} transactions
        are captured, with the revealed prediction vector
        $\mathbf{p}_{a,r}$ verified against the commit hash on-chain.
  \item At resolution, \texttt{triggerOutcomes} reads
        $\mathbf{x}_r$ from the Gnosis CTF; unresolved markets are
        excluded from scoring.
  \item Per-agent per-round Brier
        $\brier_{a,r} = |\mathcal{M}_r^{*}|^{-1}
         \sum_{i \in \mathcal{M}_r^{*}} (p_{a,r,i} - x_{r,i})^2$ and
        Alpha $\alpha_{a,r} = \brier^{\mathrm{base}}_{r} - \brier_{a,r}$
        are computed on-chain in basis points, then rescaled to
        $[0,1]$ in post-processing.
\end{enumerate}

All reported statistics ($\bar\alpha$, SE, $t$, Murphy components)
are computed from the vector of per-round scores $\{\alpha_{a,r}\}_{r=1}^{50}$
for each agent $a$.

\subsection*{C.2\enspace Murphy Decomposition}

The Murphy decomposition in Table~\ref{tab:murphy-real} uses $K=10$
equally-spaced bins on $[0, 1]$. For each agent, all $350$ predictions
across all rounds are binned by $p_i$; within each bin $k$, we compute
$\bar p_k$, $\bar o_k$, and $n_k$, then evaluate
$\REL = N^{-1}\sum_k n_k (\bar p_k - \bar o_k)^2$ and
$\RES = N^{-1}\sum_k n_k (\bar o_k - \bar o)^2$.
Empty bins are excluded. The identity $\brier = \UNC + \REL - \RES$
holds up to a small binning residual (bounded by the within-bin
variance of $p_i$), which explains the $\lesssim 0.002$ discrepancy
visible in Table~\ref{tab:murphy-real} between the column-derived
$\brier$ and the direct Brier computation.

\subsection*{C.3\enspace Reproducibility}

All analysis code is released as a self-contained Python package at
\url{https://github.com/foresight-arena/analysis}, comprising:
\texttt{scoring.py} (Brier, Alpha, and Murphy decomposition
implementations, plus the power-analysis helper of
Proposition~\ref{prop:power}), \texttt{data.py} (a unified loader
that reads either from the production subgraph or from the
deterministic simulation generator used during development),
\texttt{pipeline.py} (the aggregations producing
Tables~\ref{tab:leaderboard}, \ref{tab:murphy-real},
and~\ref{tab:categories}), \texttt{plots.py}
(\texttt{matplotlib} reproductions of Figures~\ref{fig:scoring}(b)
and~\ref{fig:trajectories} for diagnostic verification), and
\texttt{analysis.ipynb} (a Jupyter notebook tying everything
together). Any third party can reproduce our results end-to-end
from on-chain data: the only required inputs are (i) the
production subgraph endpoint, (ii) the deployed contract addresses
in Table~B.1, and (iii) \texttt{numpy}, \texttt{pandas},
\texttt{scipy}, \texttt{matplotlib}, and \texttt{requests}. A CSV
snapshot mechanism (\texttt{save\_csv}, \texttt{load\_csv}) further
allows offline replay of any historical campaign without depending
on subgraph availability.

\section{Agent Implementation Details}
\label{app:agents}

\subsection*{D.1\enspace LLM Benchmark Agent: Tool Definitions}

\begin{table}[H]
\centering
\linespread{1.0}\selectfont\small
\caption*{Table D.1: Tools available to the LLM benchmark agent.}
\begin{tabular}{lp{8.5cm}}
\toprule
\textbf{Tool}                    & \textbf{Description} \\
\midrule
\texttt{getMarketDetails(i)}     &
  Full Polymarket metadata: question, description, end date, current
  YES price, volume, liquidity, tags. \\[3pt]
\texttt{getPriceHistory(i)}      &
  Recent CLOB YES-price time series (sampled, last 7 days). \\[3pt]
\texttt{searchWeb(q)}            &
  Tavily web search for current news and context (optional;
  requires API key). \\[3pt]
\texttt{submitPredictions(\ldots)} &
  Sentinel tool capturing the model's final predictions;
  always called last. \\
\bottomrule
\end{tabular}
\end{table}

\subsection*{D.2\enspace Approximate API Cost per Round}

\begin{table}[H]
\centering
\linespread{1.0}\selectfont\small
\caption*{Table D.2: Estimated LLM API cost per round
($\approx 7$ markets, web search enabled). On-chain gas adds
$\approx\$0.001$--\$0.004 per round per agent at typical Polygon gas
prices.}
\begin{tabular}{lll}
\toprule
\textbf{Model}       & \textbf{Cost (USD)} & \textbf{Notes} \\
\midrule
Claude Opus~4        & \$0.10--\$0.30 & Strongest multi-step reasoning \\
GPT-5                & \$0.10--\$0.25 & \\
Gemini~2.5 Pro       & \$0.02--\$0.05 & Most cost-efficient frontier model \\
Grok~4               & \$0.05--\$0.15 & \\
\bottomrule
\end{tabular}
\end{table}

\end{document}